\newtheorem{theorem}{Theorem}
\newtheorem{lemma}[theorem]{Lemma}
\theoremstyle{definition} 
\newtheorem{definition}{Definition}
\theoremstyle{remark} 
\definecolor{ltblue}{rgb}{0,0.4,0.4}
\definecolor{dkblue}{rgb}{0,0.1,0.6}
\definecolor{dkgreen}{rgb}{0,0.35,0}
\definecolor{dkviolet}{rgb}{0.3,0,0.5}
\definecolor{dkred}{rgb}{0.5,0,0}
\NewDocumentCommand{\optionalParens}{s m m}{
    \IfBooleanTF{#2}{\left(#3\right)}{\IfBooleanTF{#1}{~#3}{#3}}
}
\NewDocumentCommand{\apply}{m s O{} m}{
    #1#3 \optionalParens*{#2}{#4}
}
\NewDocumentCommand{\applytwo}{m O{} s m s m}{ 
   #1#2~\optionalParens{#3}{#4}~\optionalParens{#5}{#6}
}
\newcommand{\ShortVersion}
    { \includecomment{ShortVersion}
      \excludecomment{ExtendedVersion}
    }
\newcommand{\companion}{\cite{PaykinSchmitz2023PCOAST}\xspace}
\newcommand{\PCAST}{\text{PCOAST}\xspace}
\newcommand{\poprgate}{\gate[style={rounded corners}]}
\newcommand{\startingpoprgate}{\gate[nwires={1},style={rounded corners}]}
\newcommand{\startingframegate}[2]{\gate[nwires={1,2}, wires=#1, style={rounded corners}]{#2} }
\newcommand{\PrepZ}{\texttt{PrepZ}}
\newcommand{\MeasZ}{\texttt{MeasZ}}
\newcommand{\MeasX}{\texttt{MeasX}}
\newcommand{\RX}[1]{\texttt{RX}(#1)}
\newcommand{\support}{\textsf{supp}}
\NewDocumentCommand\interp{sm}
        {\IfBooleanTF{#1}
                {\conjugate{\llbracket #2 \rrbracket}}
                {\llbracket #2 \rrbracket}
        }
\NewDocumentCommand\commute{smm}
        {#2 \IfBooleanT{#1}{\not}\upmodels #3}
\NewDocumentCommand{\commutativity}{mm}{\lambda(#1,#2)}
\newcommand{\mc}{\mathcal}
\newcommand{\mb}{\mathbb}
\newcommand{\Span}{\text{Span}}
\newcommand{\mask}{\textsf{Mask}}
\newcommand{\stabs}{\textsf{Stabs}}
\renewcommand{\proj}{\textsf{Proj}}
\newcommand{\tr}{\textsf{Tr}}
\newcommand{\decouple}{\textsf{Decouple}}
\newcommand{\cost}{\textsf{Cost}}
\newcommand{\CNOT}{\textsf{CNOT}}
\newcommand{\X}{\text{X}}
\newcommand{\eff}[1]{\textsf{eff}#1}
\newcommand{\effX}{\textsf{eff}X}
\newcommand{\effZ}{\textsf{eff}Z}
\newcommand{\effY}{\textsf{eff}Y}
\newcommand{\pprep}{\textsf{Prep}\xspace}
\newcommand{\mmeas}{\textsf{Meas}\xspace}
\newcommand{\Q}{\text{Q}}
\newcommand{\C}{\text{C}}
\newcommand{\reduceNode}{\textsc{reduceNode}\xspace}
\newcommand{\fwdAction}[1]{\overrightarrow{#1}}
\newcommand{\bwdAction}[1]{\overleftarrow{#1}}
\NewDocumentCommand{\Rot}{t2 m m o o}
    {\IfBooleanTF{#1}
        {\textsf{Rot}^2(#2, #3, #4, #5)}
        {\textsf{Rot}(#2,#3)}}
\newcommand{\Prep}[2]{\textsf{Prep}(#1,#2)}
\NewDocumentCommand{\Meas}{O{} m}{
        \textsf{Meas}^{#1}(#2)
}
\NewDocumentCommand{\conjugate}{sm}{{\optionalParens{#1}{#2}}^\ast}
\NewDocumentCommand{\pauliToFrame}{t2 m}
        {\IfBooleanTF{#1}
                {F^{#2, \tfrac{\pi}{2}}}
                {F^{#2}}
        }
\NewDocumentCommand\cosBy{t2 m}
        {\IfBooleanTF{#1}
                {\cos{(\tfrac{#2}2)}}
                {\cos{#2}}
        }
\NewDocumentCommand\sinBy{t2 m}
        {\IfBooleanTF{#1}
                {\sin{(\tfrac{#2}2)}}
                {\sin{#2}}
        }
\NewDocumentCommand{\commuteTTRule}{o}
        {\IfNoValueTF{#1}
                {\ensuremath{\bot^t}}
                {\textsc{#1-\ensuremath{\bot^t}}}}
\NewDocumentCommand{\commuteTPRule}{o}
        {\IfNoValueTF{#1}
                {\ensuremath{\bot^p}}
                {\textsc{#1-\ensuremath{\bot^p}}}}
\NewDocumentCommand{\mergestep}{s}
        {\longrightarrow_M \IfBooleanT{#1}{^\ast}}
\NewDocumentCommand{\framestep}{s}
        {\longrightarrow_F \IfBooleanT{#1}{^\ast}}
\DeclareMathOperator*{\argmin}{arg\,min} 
\title{PCOAST:
        (Extended Version) 
}
\title{Optimization at the Interface of Unitary and Non-unitary Quantum Operations in \PCAST
}
\author{
    \IEEEauthorblockN{
        Albert T. Schmitz\IEEEauthorrefmark{1}\IEEEauthorrefmark{3},
        Mohannad Ibrahim\IEEEauthorrefmark{1},
        Nicolas P. D. Sawaya\IEEEauthorrefmark{2},
        Gian Giacomo Guerreschi\IEEEauthorrefmark{2}, \\
        Jennifer Paykin\IEEEauthorrefmark{1},
        Xin-Chuan Wu\IEEEauthorrefmark{2},
        and
        A. Y. Matsuura\IEEEauthorrefmark{1}
        }
    \IEEEauthorblockA{\IEEEauthorrefmark{1}
    \textit{Intel Labs, Intel Corporation}, 
    Hillsboro, OR, USA \\
    }
    \IEEEauthorblockA{\IEEEauthorrefmark{2}
    \textit{Intel Labs, Intel Corporation},
    Santa Clara, CA, USA\\
    }
    
    \IEEEauthorblockA{\IEEEauthorrefmark{3}
    Email: albert.schmitz@intel.com }
}
\begin{document}

\includecomment{DraftVersion} 
\ShortVersion

\maketitle

\begin{DraftVersion}
\thispagestyle{plain}
\pagestyle{plain}
\end{DraftVersion}

\begin{abstract}
The Pauli-based Circuit Optimization, Analysis and Synthesis Toolchain (PCOAST) was recently introduced as a framework for optimizing quantum circuits. It converts a quantum circuit to a Pauli-based graph representation and provides a set of optimization subroutines to manipulate that internal representation as well as methods for re-synthesizing back to a quantum circuit. In this paper, we focus on the set of subroutines which look to optimize the PCOAST graph in cases involving unitary and non-unitary operations as represented by nodes in the graph. This includes reduction of node cost and node number in the presence of preparation nodes, reduction of cost for Clifford operations in the presence of preparations, and measurement cost reduction using Clifford operations and the classical remapping of measurement outcomes. These routines can also be combined to amplify their effectiveness.

We evaluate the PCOAST optimization subroutines using the Intel$^{\text{\textregistered}}$\xspace Quantum SDK on examples of the Variational Quantum Eigensolver (VQE) algorithm. This includes synthesizing a circuit for the simultaneous measurement of a mutually commuting set of Pauli operators. We find for such measurement circuits the overall average ratio of the maximum theoretical number of two-qubit gates to the actual number of two-qubit gates used by our method to be 7.91.
\end{abstract}

\begin{IEEEkeywords}
quantum computing, quantum circuit optimization, Pauli optimization
\end{IEEEkeywords}

\section{Introduction}

\PCAST, a Pauli-based Circuit Optimization, Analysis, and Synthesis Toolchain, was recently introduced as a comprehensive framework for quantum circuit optimization~\companion. It draws on a class of unitary quantum circuit optimizations based on Pauli rotations~\citep{Zhang2019,cowtan2019phase,Schmitz2021} that take advantage of the fact that unitary circuits can be decomposed into Clifford gates  and Pauli rotations $\Rot{P}{\theta}=e^{-i \theta/2 P}$, where $P$ is a Pauli matrix $X$, $Y$, or $Z$.  Because Cliffords always map Paulis to Paulis by conjugation, it is possible to push Clifford unitaries $U$ past the non-Clifford Pauli rotations $\Rot{P}{\theta}$ to produce a new rotation $\Rot{U P U^\dagger}{\theta}$. Doing so can expose the fact that some rotations can be merged~\citep{Zhang2019}.

\PCAST extends these optimizations into a comprehensive framework in several key ways. First, it introduces the \PCAST graph whose nodes represent quantum operations with edges based on the non-commutativity of the underlying Pauli operators. While prior work was limited to unitary optimizations, \PCAST acts on mixed unitary and non-unitary circuits by introducing preparation and measurement nodes parameterized by Paulis alongside unitary rotations and Clifford nodes. Second, \PCAST introduces a customizeable greedy synthesis algorithm to synthesize efficient gate representations from the \PCAST graph. And finally, \PCAST implements sophisticated internal optimizations on the \PCAST graph, which are the focus of this paper.

This paper presents the details of those \PCAST-to-\PCAST graph optimizations primarily aimed at exploiting the interaction between unitary and non-unitary nodes. These optimizations include reducing both the cost and number of non-Clifford nodes in the presence of preparations and the cost of Clifford operations in the presence of preparations. However, the most impactful of these optimizations is the reduction of mutually commuting measurements and their cost using Clifford operations and classical measurement remapping. Though other methods exist \cite{Verteletskyi2020,Gokhale2020, Crawford2021}, we demonstrate that our methods:  
\begin{itemize}
\item make no additional restriction on the measurement set;
\item leverage full and partial qubit agreement (see Defn.~\ref{def:agree});
\item have no limit on size or independence of the Pauli measurement set;
\item leverage lower-weight elements in the span of the measurement set to minimize the circuit costs; and
\item are fully integrated into the \PCAST framework, and inherit its efficiencies, refinements, and scalability\cite{PaykinSchmitz2023PCOAST} including future development. 
\end{itemize}
All methods discussed throughout the paper are implemented in the Intel$^{\text{\textregistered}}\xspace$ Quantum Software Development Kit (SDK)\footnote{\url{https://developer.intel.com/quantumsdk}}~\citep{Khalate2022}, including the automated remapping of measurement results.

The structure of the paper is as follows: In Section~\ref{sec:motive}, we motivate the set of optimization implemented by \PCAST, before giving some theoretical background in Section~\ref{sec:back}. We then discuss the Stabilizer Search Problem and our solution to it in Section~\ref{sec:stab} as well as the optimization of unitary channels in the presence of preparations in Section~\ref{sec:prepredux}. Section~\ref{sec:fullopt} combines these into a full optimization scheme. In Section~\ref{sec:eval}, we evaluate the performance of our methods for the Variational Quantum Eigensolver (VQE) algorithm where we find an overall average ratio of the maximum theoretical number of two-qubit gates to the actual number of two-qubit gates used by our method to be $7.91$. This is compared to the value of $3.5$ as found for a similar metric using the best known existing solution to the Stabilizer Search problem\cite{Crawford2021}. Finally, we make concluding remarks in Section~\ref{sec:conc}.

\section{Motivation} \label{sec:motive}
\begin{figure*}
\centering
\subfloat[Positive control on a qubit prepared in the $\ket{0}$ state is equivalent to just the preparation.]{
    \begin{tikzpicture} \node[scale=0.7] at (0,0) {
      \begin{quantikz}[row sep=0.1cm]
        & \gate{\PrepZ} & \ctrl{1} & \qw\\
        & \qw & \gate{\RX{\theta}} & \qw
      \end{quantikz}};
      
      \node at (2, 0) {\Large =};

     \node[scale=0.7] at (3.25, .27){
      \begin{quantikz}[row sep=0.1cm]
        & \gate{\PrepZ} & \qw
      \end{quantikz}};

    \node at (5.25, 0) {\Large $\Rightarrow$};

    \node[scale=0.7] at (8,0) {
       \begin{tikzcd}[row sep=0.2cm]
          \startingpoprgate{\Prep{Z_0}{X_0}} \arrow[r] 
            &\poprgate{\Rot{Z_0 X_1}{-\theta/2}}
          \\
          &\startingpoprgate{\Rot{X_1}{\theta/2}}
      \end{tikzcd}
      };
      
      \node at (10.25, 0) {\Large =};
      
      \node[scale=0.7] at (11.5,0) {
        \begin{tikzcd}[row sep=0.2cm]
          \poprgate{\Prep{Z_0}{X_0}}
        \end{tikzcd}
      };
    
     \end{tikzpicture}
    \label{fig:motivea}
  }

  \subfloat[Negative control on a qubit prepared in the $\ket{0}$ state is equivalent to just the preparation and the uncontrolled unitary. ]{
    \begin{tikzpicture} \node[scale=0.7] at (0,0) {
      \begin{quantikz}[row sep=0.1cm]
        & \gate{\PrepZ} & \octrl{1} & \qw\\
        & \qw & \gate{\RX{\theta}} & \qw
      \end{quantikz}};
      
      \node at (2, 0) {\Large =};

     \node[scale=0.7] at (3.25, 0){
      \begin{quantikz}[row sep=0.1cm]
        & \gate{\PrepZ} & \qw \\
        & \gate{\RX{\theta}} & \qw
      \end{quantikz}};

    \node at (5.25, 0) {\Large $\Rightarrow$};

    \node[scale=0.7] at (8,0) {
       \begin{tikzcd}[row sep=0.2cm]
          \startingpoprgate{\Prep{Z_0}{X_0}} \arrow[r] 
            &\poprgate{\Rot{Z_0 X_1}{\theta/2}}
          \\
          &\startingpoprgate{\Rot{X_1}{\theta/2}}
      \end{tikzcd}
      };
      
      \node at (10.25, 0) {\Large =};
      
      \node[scale=0.7] at (11.5,0) {
        \begin{tikzcd}[row sep=0.2cm]
          \poprgate{\Prep{Z_0}{X_0}}\\
          \startingpoprgate{\Rot{X_1}{\theta}}
        \end{tikzcd}
      };
    
     \end{tikzpicture}
    \label{fig:motiveb}
    
}

\subfloat[Equivalent example as Fig.~\ref{fig:motivea} but in this case, the unitary is a Clifford, represented by a Pauli frame (see Sec~\ref{sec:frame}).]{
    \begin{tikzpicture} \node[scale=0.6] at (0,0) {
      \begin{quantikz}[row sep=0.1cm]
        & \gate{\PrepZ} & \ctrl{1} & \qw\\
        & \qw & \ctrl{} & \qw
      \end{quantikz}};
      
      \node at (2, 0) {\Large =};

     \node[scale=0.6] at (3.25, .14){
      \begin{quantikz}[row sep=0.1cm]
        & \gate{\PrepZ} & \qw
      \end{quantikz}};

    \node at (5.25, 0) {\Large $\Rightarrow$};

    \node[scale=0.6] at (8,0) {
        \begin{tikzcd}[row sep=0.2cm]
          \startingpoprgate{\Prep{Z_0}{X_0}} \arrow[r]
          & \startingframegate{1}{\begin{pmatrix}
              Z_0 & X_0 Z_1 \\
              Z_1 & Z_0 X_1
            \end{pmatrix}}
        \end{tikzcd}
      };
      
      \node at (10.25, 0) {\Large =};
      
      \node[scale=0.6] at (12.5,0) {
        \begin{tikzcd}[row sep=0.2cm]
          \poprgate{\Prep{Z_0}{X_0}}
          & \startingframegate{1}{\begin{pmatrix}
                Z_0 & X_0 \\
                Z_1 & X_1
          \end{pmatrix}}
        \end{tikzcd}
      };
    
     \end{tikzpicture}
    \label{fig:motivec}
  }

  \subfloat[Similar to Fig.~\ref{fig:motivec} but here we demonstrate the resulting \PCAST interpretation on downstream measurements for a release outcome.]{
    \begin{tikzpicture} \node[scale=0.6] at (0,0) {
      \begin{quantikz}[row sep=0.1cm]
        & \gate{\PrepZ} & \ctrl{1} & \gate{\MeasX}\\
        & \qw & \targ{} & \gate{\MeasZ}
      \end{quantikz}};
      
      \node at (2.25, 0) {\Large =};

     \node[scale=0.6] at (4, 0){
      \begin{quantikz}[row sep=0.1cm]
        & \gate{\PrepZ} & \gate{\MeasX}\\
        & \qw & \gate{\MeasZ} &
      \end{quantikz}};

    \node at (6.25, 0) {\Large $\Rightarrow$};

    \node[scale=0.6] at (9.25, 0) {
       \begin{tikzcd}[row sep=0.2cm]
         \poprgate{\Prep{Z_0}{X_0}} \arrow[r] \arrow[rd]
         &\startingpoprgate{\Meas[c_0]{X_0 X_1}}\\
         &\startingpoprgate{\Meas[c_1]{Z_0 Z_1}}
       \end{tikzcd}
     };

    \node at (12, 0) {\Large =};

    \node[scale=0.6] at (14.25, 0) {
       \begin{tikzcd}[row sep=0.2cm]
         \poprgate{\Prep{Z_0}{X_0}} \arrow[r]
         &\startingpoprgate{\Meas[c_0]{X_0}}\\
         &\startingpoprgate{\Meas[c_1]{Z_1}}
       \end{tikzcd}
     };
    
     \end{tikzpicture}
    \label{fig:motived}

  }

\caption{
Simple optimizations following preparations, and the corresponding optimizations on \PCAST graphs.
} \label{fig:motive}
  
\end{figure*}

Much of \PCAST's efficacy comes from the structure of the \PCAST graph, primarily by exploiting commutativity of the operations. This allows for merging of some operations and synthesizing of a more optimal ordering when implemented. However, this alone is not enough to capture many known simplifications, especially at the interfaces of unitary and non-unitary elements.
Fig.~\ref{fig:motive} shows some common quantum circuit optimizations involving preparations. Each of these is an example of a unitary controlled on a qubit prepared in a computational state and is not captured by the structure of the \PCAST graph on its own.
Furthermore, the result of each example translates to the \PCAST graph differently, and optimizations which capture and generalize these have to be handled separately. This is not a drawback, however, because the generalization can express otherwise non-intuitive relations that go well beyond simple pattern matching. We demonstrate this process in Section~\ref{sec:prepredux}. 


Another opportunity for optimization using \PCAST comes from a mutually commuting set of measurement nodes occurring at the end of a \PCAST graph. Such a case occurs when the graph is built from a quantum circuit that ends by measuring most or all of its qubits. A similar situation occurs when measuring an arbitrary Hamiltonian for the broad class of VQE algorithms \cite{Peruzzo_2014, Bharti2022, Fedorov2022} or any other class which requires the expectation value of a Hermitian observable. Though other methods exist \cite{Huggins2021, Klymko2021, Huang2020}, a common strategy for extracting expectation values is to expand the operator in the Pauli operator basis, then group the terms of that expansion into mutually commuting sets. These sets can either be qubit-wise commuting \cite{Verteletskyi2020}, fully commuting \cite{Gokhale2020} or sorted for the sake of minimizing total shots on the quantum computer \cite{Crawford2021}. The qubit-wise commutation case has a direct translation to a circuit---a single-qubit basis change to the agreed single-qubit Pauli operator support (see Defn.~\ref{def:agree}). In the latter two cases, however, the authors generate a  Clifford circuit to extract a set of simultaneous measurements capable of reconstructing all elements. (see Defn.~\ref{def:stabsearch} for a formal description). 


\begin{figure}
\centering

  \begin{tikzpicture}

    \node[scale=0.6] at (0,0) {
       \begin{tikzcd}[row sep=0.2cm]
          \startingpoprgate{\Meas[c_0]{Z_0}} &
          \startingpoprgate{\Meas[c_1]{Z_0 Z_1}} &
          \startingpoprgate{\Meas[c_2]{Z_0 Z_1 Z_2}} \\   
      \end{tikzcd}
      };
      
      \node at (0, -.5) {\Large $\Downarrow{}$};
      
      \node[scale=0.6] at (0,-1.3) {
      \begin{quantikz}[row sep=0.3cm]
         \qw & \gate{\MeasZ^{c_0}} & \targ{} & \gate{\MeasZ^{c_1}} & \targ{} & \gate{\MeasZ^{c_2}} & \targ{} & \targ{} & \qw \\
         \qw & \qw & \ctrl{-1} & \qw & \qw & \qw & \ctrl{-1} & \qw & \qw\\
         \qw & \qw & \qw & \qw & \ctrl{-2} & \qw & \qw & \ctrl{-2} & \qw
        \end{quantikz}
      };

    \end{tikzpicture}
\caption{Example of how PCOAST circuit synthesis may handle measurements because it must meet an end-state guarantee implied by the PCOAST graph representation.} \label{fig:oddsynth}
\end{figure}

When applying synthesis to a \PCAST graph with mutually commuting Pauli measurements at the end, the greedy synthesis algorithm described in Ref.~\companion can generate seemingly odd behavior, as shown in Fig.~\ref{fig:oddsynth}. Here, we see that several measurements are performed on the same qubit, and CNOT gates are used to perform a binary sum over classical variables which could be performed by classical resources. Moreover, gate action after the measurements may cause problems if the measurements on the underlying qubit system are destructive. This behavior cannot be remedied by simply replacing the sequence of measurement and Clifford gates with some equivalent sequence as described by the Refs.~\cite{Verteletskyi2020,Gokhale2020, Crawford2021}, as there is an implicit promise that the resulting quantum channel action is fully equivalent to what was originally specified, including the end quantum state returned by the channel. We refer to this as the \emph{end-state promise}.   
The end-state promise even applies when the qubit states are fully measured and entanglement is completely removed. Baring some classical feedback mechanism, synthesis does not know a priori what the outcomes will be and has no choice but to use Clifford gates to 
reconstruct the classical state and meet the promise.
However, there are many situations where the end-state is irrelevant. 
This leaves two possible desired outcomes:

\begin{enumerate}
    \item A \emph{hold outcome} indicates that the quantum state is in part or in whole a desired return of the program. Thus we must meet the end-state promise.
    \item A \emph{release outcome} indicates that the desired results are the measurement outcomes only. Once all measurement results have been recorded, the quantum state can be ``released'' and we make no promises on the final state of the quantum system. More formally:
\end{enumerate}
    \begin{definition}\label{def:release}
    A release outcome is represented by an equivalence class defined by a set of measurables, $ M = \{m_i\}$, of quantum channels such that $C$ is equivalent to $C'$ relative to $M$ if and only if $\tr(C(\rho) m_i) = \tr(C'(\rho) m_i)$ for each $i$ and for all states $\rho$.\footnote{This definition is equivalent to that provided in \companion, as written in terms of \emph{classical-quantum states}, where $M$ corresponds the space of measurement outcomes described therein.} 
    \end{definition}
    In the case of a \PCAST graph, $M$ is represented by the measurement nodes of the graph, which is implied when discussing a release outcome.

For most algorithms or quantum submodules, a release outcome is sufficient, including variational algorithms such as VQE, Thermofield double state preparation~\cite{Wu2019, Zhu2020}, and imaginary-time evolution~\cite{McArdle2019, Motta2020}, to name a few.
Hold outcomes are typically only required for submodules where partial measurements are used to update the state in real-time, such as syndrome extraction for error correction or repeat-until-success routines~\cite{Paetznick2014, Moreira2022}.

In the case of a release outcome, we are free to replace a set of mutually commuting measurements with an equivalent, more efficient set of measurements. In Section~\ref{sec:stabsearch}, we propose a method for simultaneously deriving both an equivalent set of measurements as well as a circuit realizing them. 


\section{Background}\label{sec:back}
\subsection{The Pauli group}

A single-qubit Pauli is one of $X$, $Y$, $Z$, defined as the $2\times 2$ matrices,
\begin{align}
    X &= \begin{pmatrix}
        0 & 1 \\
        1 & 0
    \end{pmatrix} \qquad
    Y = \begin{pmatrix}
        0 & -i \\
        i & 0
    \end{pmatrix} \qquad
    Z = \begin{pmatrix}
        1 & 0 \\
        0 & -1
    \end{pmatrix},
\end{align}
or the identity $I$. At times we use $\sigma \in \{X, Y, Z\}$ to represent an arbitrary, non-identity single-qubit Pauli. A Pauli operator, $P$, on $N$ qubits is then the tensor product of any combination of single-qubit Paulis, scaled by a constant $\text{phase}(P) \in \{1,-1,i,-i\} \simeq \mb{Z}_4$. By convention, the canonically positive version of a Pauli operator is the one which can be written as such with the phase $1 \in \mb Z_4$. This collection of operators can be understood as the \emph{Pauli group}, $ \mathcal{G}_N$, under matrix multiplication. The support of an $N$-qubit Pauli operator, $\support(P)$, is the set of indices $i$ for which $(P)_i \neq I$. We write $X_i$, $Y_i$, and $Z_i$ for Paulis with support $\{i\}$.

An important feature of the Pauli group is that any two members $P_1, P_2 \in \mc{G}_N$ either commute or anti-commute. We capture this in the function $\lambda: \mc{G}_N \times \mc{G}_N \to \mb{F}_2$, 
\begin{align}
    \lambda(P_1,P_2) &= \begin{cases}
        0 & P_1 \text{ commutes with } P_2 \\
        1 & \text{otherwise}
    \end{cases}
\end{align}
We have, therefore, that $P_1 \cdot P_2 = (-1)^{\lambda(P_1,P_2)} P_2 \cdot P_1$.

\subsection{The Pauli Space}
Throughout this paper, the phase factor of the Pauli group, $\mb{Z}_4$, tends to function as a hindrance to the understand of our optimizations and their functioning. This is the case because of the overwhelming use of the Hermitian versions of Pauli operators (i.e. when $\mb Z_4$ is restricted to $\{1, -1\} \simeq \mb Z_2$) and the ability for the distinction between the canonically positive and negative version of a Pauli operator to be absorbed by other mechanisms. As such, we define the \emph{Pauli space} as $\mc P_N = \mc G_N / \mb Z_4$. The Pauli space abelianizes the Pauli group, and thus can be understood as a binary vector space $\mc P_N \simeq \mb Z_2^{\oplus 2N}$\cite{Calderbank1997}. From this perspective, multiplication is lifted to binary addition, implying $I$ is the additive identity, multiplicative power is lifted to scalar multiplication and the binary field $\mb F_2$ is appropriate since $P^2 \propto I$ in $\mc G_N$. An overview of this mapping is show in Table~\ref{fig:GtoP}. We denote the lift of a Pauli operator $P \in \mc G_N$ to its $p \in \mc P_N$ equivalence class via square brackets, $p =[P]$, and implicitly always map members of $\mc P_N$ to their canonically positive element in $\mc G_N$. Although we lose the commutation relations, we can reintroduce them by extending $\mc P_N$ to a \emph{symplectic} vector space, by treating $\lambda$ as a symplectic form (as restricted to $\mc P_N$).\footnote{The necessary feature of a symplectic form $\lambda$ are for $p,q,r \in \mc P_N$:
$\lambda(p + q, r) = \lambda(p, r) + \lambda(q, r)$,
$\lambda(p, p) =  0$, and 
$p = 0$ if and only if $\lambda(p,q) = 0$ for all $q \in \mc P_N$. Note that these conditions imply $\lambda(p,q) = \lambda(q,p)$ in the binary case. Also for a subset $\mc G \subseteq  \mc P_N$, we define $\mc G^{\perp} = \{p \in \mc P_N : \lambda(p, \mc G) = \{0\}\}$.}
\begin{table}
    \caption{Overview of the correspondence between the Pauli Group and the Pauli Space. The use and interpretation of this notation should be clear in the context in which it is used.}
    \centering
    \begin{tabular}{lll} \toprule
      \textbf{Notion} & \textbf{Pauli Group} & \textbf{Pauli Space}\\
      \midrule
      membership & $P \in \mc G_N$ & $p = [P] \in \mc P_N$ \\
      scalar action ($a \in \mathbb F_2$) & $P^a$ & $ap$ \\
      group action & $P * Q$ & $p + q$ \\
      identity & I & 0 \\
      2-form & $\lambda(P, Q)$ & $\lambda(p,q)$\\
      Clifford action & $\fwdAction F(P)= U P U^{\dagger}$ & $f(p)$ \\
    \bottomrule
\end{tabular}
    \label{fig:GtoP}
\end{table}

\subsection{Pauli Frame as Pauli Space Automorphism} \label{sec:frame}

Pauli tableaus~\cite{Aaronson2004} were first introduced as a way to simulate stabilizer states generated entirely from Clifford gates ($H$, $S$, and $\CNOT$) and single-qubit measurements. Since then, Pauli tableaus have been used to represent Clifford circuits in general, not just for the purposes of stabilizer simulation. Following \citet{Schmitz2021}, in this work we refer to Pauli tableaus as \emph{Pauli frames} to emphasize the linear algebraic structure they represent for the Pauli space. 
\begin{definition}
A Pauli frame $F$ on $N$ qubits is an $N \times 2$ array of Hermitian Pauli operators,
\begin{align}\label{eq:frame}
    F = \begin{pmatrix}
       \effZ_0 & \effX_0 \\
      \effZ_1 & \effX_1 \\
        \vdots & \vdots \\
        \effZ_{N-1} & \effX_{N-1}
    \end{pmatrix},
\end{align}
which satisfy the following commutation relations:
\begin{subequations} \label{eq:comrel}
   \begin{align}
        \commutativity{\effZ_i}{\effZ_j} =& \commutativity{\effX_i}{\effX_j}
        = 0 \\
        \commutativity{\effZ_i}{\effX_j} =
        & \delta_{ij}.
    \end{align}
\end{subequations}
We also define $\effY_i = -i \effZ_i * \effX_i$ and the \emph{origin} frame $F^0$ such that $\eff \sigma_i = \sigma_i$ for all $i$.
Furthermore, we say a Pauli $P \in \mc G_N$ is \emph{in the frame} ($\in$) $F$ if there exists an index $0\leq i < N$ such that $P = \effX_i$, $P = \effY_i$ or $P = \effZ_i$.
\end{definition}

An extensive discussion on the understanding of Pauli frames as Clifford unitaries can be found in \companion. Instead, we focus our discussion here on Pauli frame as a representation of a \emph{symplectic automorphism} on the Pauli space. A symplectic automorphism on $(\mc P_N, \lambda)$ is any linear map $f: \mc P_N \to \mc P_N$ which preserves the symplectic form.\footnote{For all $p,q \in \mc P_N$, 
 $\lambda( f(p), f(q)) = \lambda(p, q).$
Some immediate consequences of this definition and the properties of $\lambda$ are that $f^{-1}$ exists and 
 $\lambda(f(p), q) = \lambda(p, f^{-1}(q)).$
}
Given a Pauli frame as defined in Eq.\eqref{eq:frame}, one can lift it to two symplectic automorphism defined by
\begin{subequations}
\begin{align} \label{eq:framelift}
\fwdAction{F}(p) = \sum_i \left( \lambda(p, [\X_i]) [\effZ_i] + \lambda(p, [Z_i]) [\effX_i] \right), \\
\bwdAction{F}(p) = \sum_i \left( \lambda(p, [\effX_i]) [Z_i] + \lambda(p, [\effZ_i]) [X_i] \right). 
\end{align}
\end{subequations}
Furthermore, we use the term \emph{frame}\cite{Bernhard2009} as it emphasizes the following linear algebraic properties. It should be obvious from the commutation relations $F$ satisfies that $\fwdAction F([Z_i]) = [\effZ_i]$ and $\fwdAction F([(X_i]) = [\effX_i]$. $F^0$ is clearly lifted to a basis for $\mc P_N$, which implies that elements of $F$ are also lifted to a basis via the expansion of any $p \in \mc P_N$,
\begin{align} \label{eq:framelift}
p = \sum_i \left( \lambda(p, [\effX_i]) [\effZ_i] + \lambda(p, [\effZ_i]) [\effX_i] \right).
\end{align}
So we can interpret a Pauli frame in this context as both a basis for $\mc P_N$ as a symplectic frame as well as a symplectic automorphism. From this and Eq. \eqref{eq:framelift}, it is clear that $\bwdAction F = \left(\fwdAction F\right)^{-1}$.\footnote{This provides a means for inverting a signed Pauli frame and thus the Clifford unitary it represents by asserting the signs of $F^{-1}$ are such that $\fwdAction F(F^{-1}) = F^0$.}
It should also be clear that any symplectic automorphism, $f$ can be interpreted as a canonically positive Pauli frame via $F = \{(f([Z_i]), 
 f([X_i])\}_{0\leq i < N}$ and thus can be mapped to a Clifford unitary.\footnote{More accurately, the set of sympletic automorphisms is isomorphic to the Clifford group quotient the Pauli group, and we are arbitrarily using canonical positivity of the elements as a means of select an element from each equivalence class of the quotient group.} It can also be shown that any set of $2N$ members of $\mc P_N$ when divided into two groups 
forms a symplectic frame, i.e. can be mapped to a Clifford unitary, if and only if they satisfy the commutation relations in Eqs.\eqref{eq:comrel}.
Importantly, linear independence in such a set is implied by the commutation relations. In fact, a corollary to this result is any set of such pairs of elements of $\mc P_N$ less than $N$ must also be linearly independent.
\subsection{The \PCAST Graph and its Semantics} \label{sec:pcoast}
The primary representation which we consider optimizing is the \emph{\PCAST graph}. In particular, we assume a \emph{frame-terminating, fully-merged} \PCAST graph as described in Ref. \companion. We leave a detailed discussion of the process of converting to the \PCAST graph representation and synthesis back to a circuit representation to the reference, but give a brief overview of the structure of the \PCAST graph representation here.

A \PCAST graph is a representation of a generic quantum program which can be described as a sequence of gate-like channels represented as nodes. We use the notation $\interp{\cdot}$ when interpreting a formal node or collection of them called a \emph{term} as a channel, where
\begin{align}
\interp{t_1; t_2} = \interp{t_2}\circ \interp{t_1}.
\end{align}
The direct equivalence between interpreted terms is referred to as \emph{hold equivalence}, $\equiv^{\text{hold}}$, which is contrasted with \emph{release equivalence}, denoted $\equiv^{\text{release}}$, when the interpreted terms both satisfy Defn.~\ref{def:release}.
The sequencing of constituent nodes of a term is then encoded in a directed acyclic graph (DAG) where nodes are non-Clifford unitary and non-unitary quantum channels. A \emph{singlet node} is any node defined by a single Pauli operator $P$ as represented by $n(P)$. This includes $\Rot{P}{\theta}$ which represents a Pauli rotation around the axis $P$ by an angle of $\theta/2$ and $\Meas[c]{P}$ which represents a projective measurement of the eigenvalues of $P$ where the outcome is recorded to classical variable $c$. A \emph{factor node} is any node defined by a pair of non-commuting Pauli operators $(P, Q)$ as represented by $n(P,Q)$. This includes $\Prep{P}{Q}$ which represents a preparation or reset of the state to the $+1$ eigenspace of $P$ using $Q$ as conditional operation to take the state out of the $-1$ eigenspace. Other single qubit rotations can be generalized as factor nodes, but we avoid discussing them here for clarity. Non-classical nodes are also included for mapping measurement outcomes.

Edges are drawn in the graph based on non-commutativity of the defining Pauli operators with the direction of the edge determined by the logical order of operation for the quantum channel it represents. Thus any topological sort of the DAG results in the same overall channel. The assumption of being fully-merged means all nodes satisfying the merging rules found in \companion are merged so long as they are not path-connected or \emph{incomparable}. A \PCAST graph is frame terminating when it contains a single Pauli frame node representing a residual Clifford unitary channel to be applied last. As discussed in \companion, any \PCAST graph can be reduced to this form and the semantics of frames is such that for any Pauli operator $P \in \mc G_N$ and terminating frame $F$,
\begin{align}
\interp{U^F}(P) = \interp{F}(P)= \bwdAction F(P).
\end{align} 
This also implies that 
\begin{align}
\interp{F_1; F_2} = \interp{F_1 \circ F_2}
\end{align}
where we define $F_1 \circ F_2 = \fwdAction F_1 \circ \fwdAction F_2(F_0) = \fwdAction F_1(F_2)$.\footnote{The action is implied to be entry-wise.} 

\section{Stabilizer Search}\label{sec:stab}
\label{sec:stabsearch}
Several of the optimizations we describe below depend on an algorithmic solution to the following problem:
\begin{definition} \label{def:stabsearch}
(\emph{Stabilizer search problem}) Let $S$ be a mutually-commuting (Hermitian) subset of $\mc G_N - \mathbb Z_2$. The span of $S$ forms the \emph{stabilizer subspace}, $\Span(S) = \mc G_S \subset \mc P_N$. Given some cost metric on Pauli frames (namely cost to implement as a circuit), find a low cost frame, $F$, for which there exists a similar mutually commuting set $S'$ such that $S' \subset F$ and $\mc G_S \subseteq \Span(S') = \mc G_{S'}$. If we require $\mc G_S = \mc G_{S'}$, we refer to it as the \emph{exact} stabilizer search problem.
\end{definition}

The primary use case for this problem is to measure a set of mutually commuting Pauli operators using the fewest resources as discussed in Section~\ref{sec:motive}. Deriving our solution to this problem is an interesting exercise in leveraging the symplectic vector space formalism. For brevity, we provide an overview of the key insights from such a derivation along with a proof of its correctness as a solution.

The power of couching the problem in the language of linear algebra is that we have a well-understood analog from which to draw intuition. As we generally expect the initial set $S$ to over-determine its span, we can see this problem as a variant of finding the null space or row space of a non-symmetric matrix. In particular, we characterize the null space of the \emph{syndrome map} in the general case or the row space of the \emph{stabilizer map} in the exact case, using the language of quantum error correction\cite{Gottesman1997, Schmitz2019}. Either way, the solution invokes Gaussian reduction, but with the analogous set of row operations, noting the \emph{two-qubit entangling} (TQE) gates are as outlined in Ref.~\cite{Schmitz2021}:\footnote{For a given qubit pair there are $3 \times 3 = 9$ TQE gates, generalizing CNOT and CZ, such that we have one Pauli operator basis for each qubit.}
\begin{enumerate}
\item Row swapping $\to$ virtual qubit/Pauli position swapping,
\item Row scalar multiplication $\to$ nothing as the field is binary,
\item In-place row addition $\to$ applying a TQE gate.
\end{enumerate}
Thus the algorithm, by virtue of (3), simultaneously produces a circuit which preforms the Clifford transformation. We also note that the process of reducing to row echelon form has the property that the column of the pivot for a reduced row can be effectively ignored from then on. This is analogously handled here by \emph{support masking} provided by the function $\mask : \mc G_N \times (\text{subsets of qubits}) \to \mc G_N$, where
\begin{align}\label{eq:mask}
    \mask(P, \Q) = \text{phase}(P) \prod_{i \in \Q} (P)_i.
\end{align}
That is, it reduces the support of $P$ to that which is contained in $\Q$. One Pauli-space specific consideration is that of \emph{support agreement}: 
\begin{definition} \label{def:agree}
For any mutually commuting set $S \subset \mc G_N$, we say $S$ agrees on support $\sigma = X, Y, Z$ for qubit $i$ iff for every element $s \in S$, $\lambda(s, \sigma_i) = 0$. That is, every element of $S$ either has support $\sigma$ or no support on $i$.
\end{definition}
With this concept, we can paraphrase Defn.~\ref{def:stabsearch} as the transformation of $S$ by a Clifford unitary such that every element agrees on support for every qubit. Thus we propose the \emph{Stabilizer Search Algorithm} in Fig.~\ref{algo:stab}, as described for the use of simultaneous measurements.

\begin{figure}
    \begin{algorithmic}[1]
    \Function{FindStabilizers}{$S$, ``general'' or ``exact''}
        \State $S'' \gets S$, $\Q \gets \support(S'')$, $F' \gets F^0$, $C \gets \emptyset$.
        \If{this is the ``general'' case}
          \For{$i \in \Q$} \label{line:genstart}
            \If{$S''$ agrees on support $\sigma$ for qubit $i$}
              \State Add a measurement for $\sigma_i$ to $C$.
              \State Remove $i$ from $\Q$.
            \EndIf
          \EndFor \label{line:genend}
        \EndIf
        \While {$\Q$ is not empty}
            \For{each $s \in S''$}
                \If{$\support(\mask(s, \Q))$ is empty}
                \State Remove $s$ from $S''$
                \ElsIf{$\mask(s, \Q) = \pm \sigma_i$ for some $i\in \Q$} \label{line:found}
                     \State Add measurement of $\pm \sigma_i$ to $C$.
                     \State Remove $i$ from $\Q$.
                \EndIf
            \EndFor
            \State Min $\gets \{\argmin_{s \in S''} \support(\mask(s, \Q))\}$ \label{line:gatebegin}
            \State MinGate $\gets \{ \text{TQE gates which reduces } \support(\text{Min}) \}$ 
            \State $g_{\text{min}} \gets \argmin_{g \in \text{MinGate}} \text{Cost}(g)$ \label{line:gateend}
            \State Add $g_{\text{min}}$ to $C$.
            \State $S'' \gets \fwdAction F_{g_{\text{min}}^{-1}}(S'')$,  $F'\gets F_{g_{\text{min}}^{-1}} \circ F'$. \label{line:applygate}
        \EndWhile
        \State \Return $(C, F'^{-1})$.
    \EndFunction
    \end{algorithmic}
  \caption{Psuedocode for Stabilizer Search Algorithm. Cost$(g)$ is any cost metric on the use of the TQE gate $g$ for reducing support.} \label{algo:stab}
\end{figure}

We defer any discussion around termination of \textsc{FindStabilizers} to the implementation section below and take for granted that the algorithm terminates. To prove Fig.~\ref{algo:stab} satisfies Defn.~\ref{def:stabsearch}, we prove the following:
\begin{lemma}
For $F$ returned by \textsc{FindStabilizers} on set $S$, the set $\fwdAction F^{-1}(S)$ agrees on its support for all qubits.
\end{lemma}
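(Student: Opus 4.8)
The plan is to prove the statement with a loop invariant for the \textsc{while} loop of \textsc{FindStabilizers}, chosen so that the emptiness of $\Q$ at termination forces every qubit to be agreed upon. First I would identify the returned frame with the composite of the TQE-gate actions: tracking the update $F' \gets F_{g_{\min}^{-1}} \circ F'$ and using $\fwdAction{F_1 \circ F_2} = \fwdAction{F_1}\circ\fwdAction{F_2}$ (immediate from $F_1 \circ F_2 = \fwdAction{F_1}(F_2)$) together with $\fwdAction{F^{-1}} = (\fwdAction F)^{-1} = \bwdAction F$, the terminal value of $F'$ satisfies $\fwdAction{F'} = \fwdAction{F_{g_K^{-1}}}\circ\cdots\circ\fwdAction{F_{g_1^{-1}}}$, where $g_1,\dots,g_K$ are the selected TQE gates in order. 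Since the routine returns $F = F'^{-1}$, this gives $\fwdAction F^{-1}(S) = \fwdAction{F'}(S)$, so it suffices to show that the ``full image'' $\fwdAction{F'}(S)$ at termination agrees on support for every qubit. I would also record that $\fwdAction{F'}(S)$ is mutually commuting at every stage, since each $\fwdAction{F_{g^{-1}}}$ is a symplectic automorphism and $S$ is mutually commuting.

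The invariant I would maintain at the top of the \textsc{while} loop, and establish with the ``general'' preamble, has two parts: \textbf{(a)} for every qubit $i \notin \Q$ there is $\sigma \in \{X,Y,Z\}$ with $\fwdAction{F'}(S)$ agreeing on support $\sigma$ for $i$; and \textbf{(b)} the sets $\{\mask(s,\Q) : s \in S''\}$ and $\{\mask(t,\Q) : t \in \fwdAction{F'}(S)\}$ coincide after $\pm I$ is discarded. The preamble establishes (a) for each qubit it drops — there $F' = F^0$ and $S'' = S$, so the tested condition ``$S''$ agrees on $\sigma$ for $i$'' is literally ``$\fwdAction{F'}(S)$ agrees on $\sigma$ for $i$'' — while keeping (b) trivially true. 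In the loop body, deleting an $s$ with empty $\Q$-support alters neither side of (b) nor (a); and applying $g_{\min}$, which (as one checks from the algorithm) acts on a qubit pair inside $\Q$, conjugates $S''$ and $\fwdAction{F'}(S)$ by the same map, commutes with $\mask(\cdot,\Q)$, and leaves the support on qubits outside $\Q$ untouched, so (a) and (b) both persist.

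The substantive step is removing a qubit $i$ from $\Q$ when some $s \in S''$ has $\mask(s,\Q) = \pm\sigma_i$: I must show that $\fwdAction{F'}(S)$ then agrees on $\sigma$ for $i$, which re-establishes (a) for the shrunken $\Q$. By (b) choose $t \in \fwdAction{F'}(S)$ with $\mask(t,\Q) = \pm\sigma_i$, so $t$ restricted to $\Q$ equals $\pm\sigma_i$. For any $u \in \fwdAction{F'}(S)$, mutual commutativity gives $\lambda(u,t) = 0$; expanding this qubit by qubit, the terms for $j \in \Q\setminus\{i\}$ vanish since $t$ has no support there, and the terms for $j \notin \Q$ vanish since by (a) both $u_j$ and $t_j$ lie in $\{I,\sigma^{(j)}\}$ and hence commute, so only the qubit-$i$ term survives and forces $\lambda(u,\sigma_i) = 0$. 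As $u$ was arbitrary, $\fwdAction{F'}(S)$ agrees on $\sigma$ for $i$; updating (b) to the smaller $\Q$ merely strips the now-agreed qubit-$i$ component from both sides. At termination $\Q$ is empty, so (a) delivers agreement on every qubit for $\fwdAction{F'}(S) = \fwdAction F^{-1}(S)$, which is the claim. I expect the commutativity computation in this paragraph to be the main obstacle — it is where the symplectic viewpoint earns its keep — the remaining work being the bookkeeping that keeps (b) honest across the preamble (where $S''$ is held fixed as $\Q$ shrinks) and the verification that each TQE gate the algorithm selects really does act within the current $\Q$.
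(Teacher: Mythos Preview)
Your proposal is correct and follows essentially the same route as the paper: both maintain the loop invariant that $\fwdAction{F'}(S)$ agrees on all qubits outside $\Q$, and both establish it via the same qubit-by-qubit commutativity computation (your $\lambda(u,t)=0\Rightarrow\lambda(u,\sigma_i)=0$ is the paper's Eq.~(\ref{eq:agrees}) read in the other direction). The only difference is cosmetic bookkeeping: where you carry the auxiliary invariant (b) to link $S''$ with $\fwdAction{F'}(S)$, the paper simply notes the containment $S''_{(m)}\subseteq S'_{(m)}=\fwdAction{F'_{(m)}}(S)$, since elements are only ever removed from $S''$ and both sets are transformed by the same gate at each step.
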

\begin{proof}
In the general case, $S$ automatically agrees on all qubits $i$ which are removed in lines \ref{line:genstart}-\ref{line:genend} of Fig.~\ref{algo:stab}. Otherwise, all qubits are in $\Q$ at the start and $S$ agrees on all qubit not in $\Q$ vacuously. Then let $\Delta \Q_{\text{init}}$ contain all qubits not in $\Q$. 

For the sake of argument, suppose at step $m$ and for all $m' \leq m$ through the main loop of the algorithm, $\fwdAction F_{(m')}'(S) = S'_{(m')}$ agrees on it support for all qubits not in $\Q_{(m')}$, where $(m')$ denotes the $m'^{th}$ versions of $F'$, $\Q$ and other quantities at the end of that iteration. 
Now consider the beginning of the $(m+1)^{th}$ iteration. At step (1), if $\mask(s, \Q_{(m)})$, for $s \in S''_{(m)} \subseteq S'_{(m)}$, has no support, then it agrees with the rest of $S'_{(m)}$ by our hypothesis and is then absent from $S''_{(m+1)}$. If $\mask(s, \Q_{(m)}) = \pm \sigma_q$, then consider for any $s' \in S'_{(m)}$,
\begin{align}
\lambda(\sigma_q, s') =& \lambda_q(\mask(s, \Q_{(m)}), s') \nonumber \\
=& \sum_{i \in \Q_{(m)} } \lambda_i(s, s') +\sum_{i \notin \Q_{(m)} } \lambda_i(s, s') \nonumber \\
=& \lambda(s, s') = \lambda(\fwdAction F_{(m)}^{-1}(s), \fwdAction F_{(m)}^{-1}(s')) \nonumber \\
=& 0. \label{eq:agrees}
\end{align}
Eq.~\ref{eq:agrees}\footnote{The subscript on $\lambda$ represents its functional support on that qubit,  the first equality is a consequence of $s$ have single-qubit masked support, the second is a consequence of our induction hypothesis,  and the final two equalities are a consequence of the automorphism property of frames and the fact that $S$ is a mutually commuting set.}  implies that all of $S'_{(m)}$ agrees on the support of $\sigma$ on $q$ and as we remove it, $q \notin \Q_{(m+1)}$. Thus by the end of step (1), all changes to $\Q$ are made such that $\Q_{(m)} \to \Q_{(m+1)}$, and all of $S''_{(m)}$ agrees on support for all of $\Delta \Q_{(m)} =  \Q_{(m)} - \Q_{(m+1)}$. 

The TQE gate selection in lines \ref{line:gatebegin}-\ref{line:gateend} insures $g_{\text{min}(m)}$ only acts on elements of $\Q_{(m+1)}$. Therefore since $S'_{(m)}$ agree on support for qubits not in $\Q_{(m+1)}$, it is the case that $S'_{(m+1)} = \fwdAction F_{g_{(m)}^{-1}}(S'_{(m)})$ must also agree on support for qubits not in $\Q_{(m+1)}$.

Assuming the algorithm terminates, there exists $M$ such that $\Q = \bigcup_{0\leq m < M}\Delta \Q_{(m)} \bigcup \Delta \Q_{\text{init}}$ and $F = F_{(M)}^{-1}$.
Therefore by induction, we have that $\fwdAction F^{-1}(S)$ agrees on its support for all qubits.
\end{proof}

An immediate corollary to this proof is that every measurement in the circuit corresponds to the support agreement for all qubits $\Q$ on which $S$ has support. Let $\Sigma = \{\sigma_q\}_{ q \in \Q}$ be the set of support agreement as found from the returned circuit. As each element of $\Sigma$ is a single qubit Pauli operator on different qubits, every element of $S' = \fwdAction F(\Sigma)$ belongs to $F$ and is mutually commuting. To prove this set of operators satisfy Defn.~\ref{def:stabsearch}, it is sufficient to prove $S' \subseteq \mc G_S^\perp$. \footnote{This on its own is actually not sufficient. We need to also show that all members of $S$, when expanded in the basis $F$, contains only terms from elements of $F$ associated with $\Q$. However, it should be clear that this is trivially true as $S$ has no support outside of $\Q$ and the gates used to generate $F$ and thus by extension $F$ itself have no action on operators with no support in $\Q$.} But this immediately follows from the above proof since,
\begin{align}
\lambda(S, S') = \lambda(S, \fwdAction F(\Sigma)) = \lambda(\fwdAction F^{-1}(S), \Sigma) = \{0\},
\end{align}

Because $\fwdAction F(\Sigma)$ is contained in the frame $F$, we can use Eq.~\ref{eq:framelift} to expand members of $S$ as a product of elements in $S'$ via the \emph{Map Measurement Algorithm} as described in Fig.~\ref{algo:mm}.\footnote{We must address why the exact version of Fig.~\ref{algo:stab} satisfies the exact condition of Defn.~\ref{def:stabsearch}. The binary matrix $\underbar b$ returned by Algorithm~\ref{algo:mm}, when restricted to elements of $S$ satisfying line~\ref{line:found} of Fig.~\ref{algo:stab} has a upper triangular form and is thus inevitable.}

\begin{figure}
    \begin{algorithmic}[1]
      \Function{MapMeasurements}{$S, C, F$}
        \State $\Sigma \gets \{\sigma_q\}$, where each $\sigma_q$ is a measurement in $C$.
        \State $S' \gets \bwdAction F(\Sigma)$, $\underbar b \gets \mathbf{0}$, $v \gets \vec{0}$.
        \ForAll{$s_i \in S$}
          \State Pauli $P \gets I$
          \ForAll{$\sigma_q \in \Sigma$}
            \If{$\lambda(s_i, F(\tilde \sigma_q))$}
              \State $P \gets P * F(\sigma_q)$, $\underbar b_{iq} \gets 1$.
            \EndIf
          \EndFor
          \If{$\text{phase}(P) \neq \text{phase}(s_i)$}
            \State $v_i \gets 1$.
          \EndIf
        \EndFor
        \State \Return $(\underbar b, v)$
      \EndFunction
    \end{algorithmic}
    \caption{Psuedocode for Map Measurement Algorithm.}\label{algo:mm}
\end{figure}

\subsection{Implementation of the Stabilizer Search Algorithm}
It should be clear from the discussion in the Section V.C of Ref.~\companion that, given the appropriate \emph{search functions}, the greedy search algorithm for circuit synthesis is well suited for implementing Fig.~\ref{algo:stab}. As argued in \companion, we are then guaranteed the algorithm terminates. Moreover we can modify almost any version of the search functions used for circuit synthesis so long as they make an additional promise.\footnote{$\reduceNode$ as described in \companion only returns TQE gates entirely supported on the qubit support of the passed node.} Thus we introduce the \emph{stabilizer search template}. This is a templated implementation of the search functions which holds internally a list of qubits for the sake of masking and removes a qubit when a measurement is added to the circuit.

The complexity of Fig.~\ref{algo:stab} also follows from Ref.~\companion as $\mathcal O(N^3 k^2)$ where $k$ is the size of the outcome stabilizer set $S'$ which is also its stabilizer space dimension. The number of TQE gates it produces is at most $Nk - k(k+1)/2$, but tends toward far lower numbers as we demonstrate in Section~\ref{sec:eval} while also taking into consideration circuit depth and the target gate set.

\section{Unitary Reduction by Preparations}\label{sec:prepredux}
\subsection{Frame Reduction}
In this section, we discuss an algorithmic solution to the following problem: \begin{definition}
(\emph{Preparation-on-frame reduction problem})
Suppose we have a set of mutually commuting, un-equal preparation channels $\pprep(\{(P_i, Q_i)\}) = \pprep(\Pi)$, followed by a Clifford unitary channel as defined by the Pauli frame $F$. Given some cost metric on Pauli frames (again, typically circuit cost to implement), find a lower cost frame $F'$ and possibly lower cost representation $\Pi' = \{(P_i', Q_i')\} \simeq \Pi$ such that,
\begin{align}\label{eq:pfredux}
\interp{\pprep(\Pi) ; F} = \interp{\pprep(\Pi') ; F'}.
\end{align}
\end{definition}

 For $\pprep(P,Q)$, we refer to $P$ as the \emph{stabilizer}, which is privileged over $Q$ which we refer to as the \emph{destabilizer}\cite{Aaronson2004}. To understand the preparation-on-frame reduction problem, we first consider what is meant by $\Pi \simeq \Pi'$. As discussed in \companion, a pair of anti-commuting Pauli operators can always be understood as defining an effective qubit factorization of the Hilbert space. From the description of $\pprep$ in Section \ref{sec:pcoast}, we find that $\Pi$ generalize this such that it defines an effective multi-qubit factorization of the Hilbert space where the outcome of $\pprep(\Pi)$ is a simultaneous $+1$ eigenstate of the stabilizers, using a specific combination of destabilizers to transform out of any other alternative stabilizer subspace.\footnote{Alternatively, we understand equivalence  by considering a symplectic subspace of $\mc P_N$ with symplectic form provided by $\lambda$ as restricted to that subspace, for which $\Pi$ is a frame for that subspace and can be used as such. $\Pi'$ is then just an alternate frame for that subspace.} Thus $\pprep(\Pi')$ must have the same action on any state. Finding such an alternative should already evoke the exact version of Defn.~\ref{def:stabsearch}. Let $S = \stabs(\Pi) = \{P_i\}$, on which we can apply the exact version of Fig.~\ref{algo:stab} to find $S' = \stabs(\Pi')$ and $F$. To construct corresponding destabilizers, $\{Q_i'\}$, note they must satisfy $\lambda(P_i', Q_i') = \delta_{ij}$ and $\lambda(Q_i', Q_j') = 0$. For simplicity, assume $Q_i' \in \Span(\{Q_j\})$ such that the second condition is satisfied. Let $\tilde{Q}_i \in F$ be one of the two elements which anti-commutes with $P_i'$. We then find that,\footnote{Note that we can freely ignore sign here, because the sign of a destabilizer does not change the channel action.}
\begin{align}\label{eq:destconst}
[Q_i'] 
       =& \sum_j \lambda(Q_i', P_j) [Q_j] = \sum_{jk} \lambda(Q_i', P_k') \lambda(\tilde{Q}_k, P_j) [Q_j] \nonumber \\
       =& \sum_{j} \lambda(\tilde{Q}_i, P_j) [Q_j],
\end{align}

We also consider the equivalence in Eq. \eqref{eq:pfredux}. An alternative formulation of $\interp{\pprep(\Pi)}$ is as a partial trace over he subsystem/factor defined by $\Pi$, tensor product with the projection onto the appropriate stabilizer subspace, i.e.
\begin{align} \label{eq:prepchannel}
\interp{\pprep(\Pi)}(\rho) = \tr_{\Pi}(\rho) \otimes \proj(S),
\end{align}
where $\proj(S)$ is defined as the projection onto the simultaneous $+1$ stabilizer space of $S$. From this, it is clear that when $\rho$ is expanded in the Pauli operator basis, terms not wholly outside of the subsystem defined by $\Pi$ are zero under the partial trace and all remaining terms are multiplied in equal measure by elements from $\Span(S)$. Thus the frames $F$ and $F'$ in Eq.~\eqref{eq:pfredux} need only have the same action on this limited set of Pauli operators. This provides us with the following conditions:

\begin{lemma} \label{lemma:pfcond}
Eq.~\eqref{eq:pfredux} is true if and only if\footnote{Note, $s \in \Span(S)$, $s\proj(S) = \proj(S)$, i.e. the projection operator can always absorb a stabilizer. This is necessary for understanding Eq.\ref{eq:pfcondb}.}
\begin{subequations} \label{eq:pfcond}
\begin{align}
\bwdAction F(\Span(S))  \simeq & \bwdAction F'(\Span(S)), \label{eq:pfconda}\\
\frac{\bwdAction F(\Span(\Pi)^\perp \oplus \Span(S) )}{\Span(S)} =& \frac{\bwdAction F'(\Span(\Pi)^\perp \oplus \Span(S))}{\Span(S)} \label{eq:pfcondb}
\end{align}
\end{subequations}
\end{lemma}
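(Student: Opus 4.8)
The plan is to unfold both composed channels using the partial-trace form of a preparation in Eq.~\eqref{eq:prepchannel}, reduce Eq.~\eqref{eq:pfredux} to a statement about where the two terminating Clifford channels must agree, and then read off Eqs.~\eqref{eq:pfconda} and~\eqref{eq:pfcondb} as the ``projector part'' and the ``Pauli part'' of that agreement. Since $\Pi'\simeq\Pi$, the two preparations define the same factorization and prepare the same stabilizer projection, so $\tr_\Pi=\tr_{\Pi'}$ and $\proj(S)=\proj(S')$. Using $\interp{t_1;t_2}=\interp{t_2}\circ\interp{t_1}$ together with Eq.~\eqref{eq:prepchannel}, Eq.~\eqref{eq:pfredux} becomes the requirement that $\interp F$ and $\interp{F'}$ agree on the image of the channel $\rho\mapsto\tr_\Pi(\rho)\otimes\proj(S)$.

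First I would pin down that image. Expanding $\rho$ in the Pauli operator basis and using the observation made just above --- the partial trace annihilates every Pauli with support inside the $\Pi$-factor and rescales each survivor by a common factor of $\proj(S)$ --- the image is the span of $\{P\,\proj(S):[P]\in\Span(\Pi)^\perp\}$, and these operators are linearly independent, so agreement on the image is equivalent to agreement on each $P\,\proj(S)$ separately. Because $s\,\proj(S)=\proj(S)$ for every $s\in\Span(S)$, the Paulis that actually matter form the coset space $\bigl(\Span(\Pi)^\perp\oplus\Span(S)\bigr)/\Span(S)=\Span(S)^\perp/\Span(S)$ (the sum is direct since $\Span(\Pi)$ is symplectic), which is the origin of the $\oplus\Span(S)$ and of the quotient by $\Span(S)$ in Eq.~\eqref{eq:pfcondb}.

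Then, for each such Pauli, I would use that $\interp F$ is conjugation by a Clifford unitary, hence a homomorphism on $\mc G_N$: $\interp F(P\,\proj(S))=\interp F(P)\cdot\interp F(\proj(S))$, with $\interp F(P)=\bwdAction F(P)$ and $\interp F(\proj(S))$ the projection onto the simultaneous $+1$ eigenspace of $\bwdAction F(S)$. Taking $P=I$ forces $\interp F(\proj(S))=\interp{F'}(\proj(S))$, i.e.\ the two frames must carry $\Span(S)$ to the same stabilizer subspace with compatible signs --- this is precisely Eq.~\eqref{eq:pfconda}, the ``$\simeq$'' absorbing the sign ambiguity invisible to $\proj(S)$. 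Granting that the two projectors coincide, the remaining constraint is that $\bwdAction F(P)$ and $\bwdAction{F'}(P)$ act identically on it; since two Paulis commuting with a stabilizer projector act identically on it exactly when they differ by an element of its stabilizer subspace, this is the coset equality of Eq.~\eqref{eq:pfcondb} (modulo the same sign freedom, recorded by the quotient). All these steps are reversible, so conversely Eqs.~\eqref{eq:pfcond} yield agreement on a spanning set of the image, hence Eq.~\eqref{eq:pfredux}.

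I expect the phase bookkeeping to be the main obstacle: the Pauli space $\mc P_N$ has discarded the $\mb{Z}_4$ phase, yet $\proj(S)$ depends on the $\pm1$ signs of its stabilizers and Clifford conjugation puts signs on Paulis, so the claim ``$\bwdAction F(P)$ and $\bwdAction{F'}(P)$ act the same way on $\proj(S)$ iff they differ by a stabilizer'' has to be checked at the level of $\mc G_N$, verifying that the ``$\simeq$'' of Eq.~\eqref{eq:pfconda} and the quotient by $\Span(S)$ of Eq.~\eqref{eq:pfcondb} absorb exactly the leftover sign freedom. The rest --- multiplicativity of Clifford conjugation, the partial-trace description of $\pprep$, and the linear independence of $\{P\,\proj(S)\}$ --- is routine.
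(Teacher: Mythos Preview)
Your proposal is correct and follows essentially the same approach as the paper: the paper's argument (given in the paragraph immediately preceding the lemma rather than as a formal proof) also unfolds $\interp{\pprep(\Pi)}$ via the partial-trace form of Eq.~\eqref{eq:prepchannel}, expands $\rho$ in the Pauli basis to see that only Paulis outside the $\Pi$-factor survive and get multiplied by elements of $\Span(S)$, and then reads off the two conditions as the constraints on $F,F'$ restricted to that image. Your write-up is simply a more detailed execution of that sketch, including the explicit identification of the image as $\{P\,\proj(S):[P]\in\Span(\Pi)^\perp\}$ and the careful treatment of the sign/phase bookkeeping that the paper relegates to a footnote.
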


The first says that $F$ can modify the stabilizers over what $F'$ does, but it must preserve the space.\footnote{Though we are using the language the Pauli space here, it is fundamental that the signs are also preserved by $F$ as well.} The second condition is more strict with the exact equality implying that any Pauli operator which commutes with all of the elements of $\Pi$ must be transformed exactly the same by $F$ and $F'$ modulo some element of $\Span(S)$.

This leads us to introduce a frame transformation we refer to as \emph{qubit decoupling}: 
\begin{definition}
For an anti-commuting pair $(P, Q)$ and frame $F$ as defined in Eq.~\ref{eq:frame} such that $P \in F$, the qubit decoupling transformation $\decouple(F, (P,Q))$ is defined as
\begin{align}
&\begin{pmatrix}
\effZ_0  P^{\lambda(\effZ_0, Q)}  & \effX_0  P^{\lambda(\effX_0, Q)} \\
\effZ_1  P^{\lambda(\effZ_1, Q)}  & \effX_1  P^{\lambda(\effX_1, Q)} \\
\vdots & \vdots \\
 P     &  Q  \\
 \vdots & \vdots \\
\effZ_{N-1}  P^{\lambda(\effZ_{N-1}, Q)}  & \effX_{N-1}  P^{\lambda(\effX_{N-1}, Q)} \\
    \end{pmatrix}.
\end{align}
\end{definition}
At times we omit the argument $(P,Q)$ for brevity. To show $\decouple$ results in a valid frame, note $P$ was a part of $F$ and has all the correct commutation relation with respect to the other elements of the resultant frame. Likewise is true for those elements multiplied by it. The only commutation relations of interest involve $Q$ for which,
\begin{align}
\lambda(Q, \effZ_i P^{\lambda(\effZ_i, Q)}) =& \lambda(Q, \effZ_i) +\lambda(P, Q)\lambda(\effZ_i, Q) \nonumber \\
=& 0  
\end{align}
Next we show the transform $\decouple$ preserves the properties of Lemma~\ref{lemma:pfcond}. Since $P$ is a member of the frame, Eq. \eqref{eq:pfconda} is satisfied trivially. Without loss of generality, suppose $P$ is $\effZ_q$ and consider $r \in \mc P_N$ such that $\lambda(P,r) = \lambda(Q, r) = 0$. Note this implies for $i \neq q$ $\lambda(r, \effZ_iP) = \lambda(r, \effZ_i)$, and likewise for $\effX_i$. Thus we find, 
\begin{align}
[\interp{\decouple(F)}(r)] =
[\interp{F}(r)] + \lambda(r, \effX_q) [P].
\end{align}
Therefore, $\decouple$ also satisfies Eq. \eqref{eq:pfcondb}.  

\begin{figure}
  \begin{algorithmic}[1]
     \Function{ReduceFrameByPrep}{$\Pi, F$}
     \State $S \gets \stabs(\Pi)$, $M \gets |S|$, $\tilde S \gets \bwdAction F(S)$
     \State $(C, F_{\text{aux}}) \gets \Call{FindStabilizers}{\tilde S, \text{``exact''}}$. \label{line:SSAcall}
     \State Let $\{\sigma_{q_i}\}_{0 \leq i < M}$ be the measurements in order from $C$.
     \State Construct $\Pi'$ using $\Pi$, $F \circ F_{\text{aux}}$ and Eq. \eqref{eq:destconst}.
     \State $c \gets \cost(\Pi)$, $c' \gets \cost(\Pi)$
     \If{$c' < c$} \label{line:branch}
       \State $\tilde F \gets \decouple(F\circ F_{\text{aux}}, \Pi')$ \Comment{in any order}
       \State \Return $(\Pi', \tilde F \circ F_{\text{aux}}^{-1})$.
     \Else \label{line:else}
       \State $\tilde F_0 \gets F\circ F_{\text{aux}}$.
       \For{$i= 0$ up to $M$}
         \State Let $(P_i, Q_i) \in \Pi$ such that $P_i = \bwdAction {\tilde F_i}(\sigma_{q_i})$.
         \State $\tilde F_{i+1} \gets \decouple(\tilde F_{i}, (P_i, Q_i))$
       \EndFor
       \State \Return $(\Pi, \tilde F_M \circ F_{\text{aux}}^{-1})$
     \EndIf
     \EndFunction 
  \end{algorithmic}
  \caption{Psuedocode for Preparation-on-Frame Reduction Algorithm. $\cost(\Pi)$ represents some cost evaluation for implementing of $\pprep(\Pi)$.} \label{algo:pfredux}
\end{figure}

$\decouple$ is a powerful tool for reducing the cost of a frame in the presence of a single $\pprep$, but requires the stabilizer to be a part of the frame. To generalize, we find an auxiliary frame, $F_{\text{aux}}$ such that $F \circ F_{\text{aux}}$ does contain the stabilizer. As such $F' = \decouple(F \circ F_{\text{aux}}) \circ F_{\text{aux}}^{-1}$, satisfies Eq.~\eqref{eq:pfcond}. This processes is then generalized  $\Pi$ or alternatively $\Pi'$. For this purpose, we introduce the \emph{Preparation-on-Frame Reduction Algorithm} in Fig.~\ref{algo:pfredux}. The correctness of Fig.~\ref{algo:pfredux} can then be argued with the following lemma:

\begin{lemma}
Every call to $\decouple$ in Fig.~\ref{algo:pfredux} is well-formed.
\end{lemma}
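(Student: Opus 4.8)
The definition of $\decouple(F,(P,Q))$ carries two hypotheses: that $(P,Q)$ anti-commutes, and that $P$ is an element of the frame ($P\in F$). So the plan is to check these two conditions at every invocation in Fig.~\ref{algo:pfredux}, treating the two branches of line~\ref{line:branch} in parallel. Anti-commutativity is the easy half: in the \textbf{else} branch the argument is literally a pair $(P_i,Q_i)\in\Pi$ and a preparation node is by definition built from a non-commuting pair; in the \textbf{if} branch the argument ranges over the pairs of $\Pi'$, and by construction (the destabilizer formula Eq.~\eqref{eq:destconst}) $\lambda(P_i',Q_i')=1$. So all the work is in the membership condition $P\in F$.

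For membership I will set up a loop invariant: just before the $\decouple$ call that installs the $i$-th pair, the current frame contains every pair that has not yet been processed, in particular the $i$-th one. The inductive step is the core computation and uses only the definition of $\decouple$ together with the symplectic relations of a preparation set. After $\decouple(G,(P_j,Q_j))$, row $j$ of the output frame is exactly $(P_j,Q_j)$ and any other row $i$ becomes $\big(\effZ_i\,P_j^{\lambda(\effZ_i,Q_j)},\ \effX_i\,P_j^{\lambda(\effX_i,Q_j)}\big)$; hence a stabilizer $P_k$ that was a frame element of $G$ stays a frame element of the output iff $\lambda(P_k,Q_j)=0$, and a pair $(P_l,Q_l)$ already installed survives unchanged iff $\lambda(P_l,Q_j)=\lambda(Q_l,Q_j)=0$. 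Both requirements hold for every $k,l\ne j$ because both $\Pi$ and $\Pi'$ are symplectic frames for the common subspace fixed by the original preparations, so their stabilizers and destabilizers satisfy $\lambda(P_a,P_b)=0$, $\lambda(Q_a,Q_b)=0$, $\lambda(P_a,Q_b)=\delta_{ab}$ (the defining property for $\Pi$, and arranged for $\Pi'$ by the exact stabilizer search together with Eq.~\eqref{eq:destconst}). Thus processing one pair neither ejects a future pair nor disturbs a past one, and the invariant propagates.

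The base case — that the frame $F\circ F_{\mathrm{aux}}$ at the start of the loop already contains the first stabilizer — is where I invoke the correctness of \textsc{FindStabilizers} from the exact-mode call on line~\ref{line:SSAcall}. By the lemma following Fig.~\ref{algo:stab} and its corollary, that call returns $F_{\mathrm{aux}}$ and single-qubit Paulis $\Sigma=\{\sigma_{q_i}\}$ with $\fwdAction{F_{\mathrm{aux}}}(\Sigma)\subseteq F_{\mathrm{aux}}$ and $\Span\!\big(\fwdAction{F_{\mathrm{aux}}}(\Sigma)\big)=\Span(\tilde S)=\bwdAction F(\Span(S))$; conjugating through $F$ (using $\fwdAction{F\circ F_{\mathrm{aux}}}=\fwdAction F\circ\fwdAction{F_{\mathrm{aux}}}$) shows the corresponding generators lie in $F\circ F_{\mathrm{aux}}$ and span $\Span(S)$. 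In the \textbf{if} branch these generators are precisely the reduced stabilizers $\stabs(\Pi')=\{P_i'\}$ produced on the next line, so each $P_i'\in F\circ F_{\mathrm{aux}}$ and the invariant is seeded. In the \textbf{else} branch the loop instead selects $(P_i,Q_i)\in\Pi$ so that $P_i$ is the running-frame generator of $\tilde F_i$ indexed by $\sigma_{q_i}$, which makes $P_i\in\tilde F_i$ immediate; that such a pair really lies in the original $\Pi$ and that this indexing stays consistent as the $\tilde F_i$ evolve is exactly the exact-mode promise — the upper-triangular, invertible form of the matrix $\underbar b$ from Fig.~\ref{algo:mm} — which lets the $\sigma_{q_i}$ be peeled off one at a time while still recovering a genuine element of $\Pi$ at each step.

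The main obstacle I anticipate is precisely this bookkeeping: since $\decouple$ rewrites the frame, I must be careful that installing pair $j$ cannot knock pair $i$ out of the frame for any $i\ne j$, and that the base case really does follow after transporting the \textsc{FindStabilizers} output through $F$ — in the \textbf{else} branch this further requires reconciling the evolving frames $\tilde F_i$ with the ordered measurement list and the triangular structure of $\underbar b$. Once the commutation algebra is laid out, the remainder is mechanical.
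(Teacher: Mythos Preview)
Your plan matches the paper's: both proofs split on the branch at line~\ref{line:branch}, use the symplectic relations $\lambda(P_j',Q_i')=\delta_{ij}$ to show that unprocessed stabilizers survive each $\decouple$ in the \textbf{if} branch, and invoke the upper-triangular structure coming out of the exact \textsc{FindStabilizers} call for the \textbf{else} branch.

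One caution on your loop invariant. As stated --- ``the current frame contains every pair not yet processed'' --- it is clean for the \textbf{if} branch but \emph{false} for the \textbf{else} branch. At step $0$ the frame $\tilde F_0=F\circ F_{\mathrm{aux}}$ contains the reduced stabilizers $P_i'$, not the original $P_i$; by the triangular relation one has $[P_m]=[P_m']+\sum_{i<m}c_{mi}[P_i']$, so for $m>0$ the element $P_m$ is generically a product of several distinct frame entries rather than a single one. You appear to sense this and pivot (``the loop instead selects\ldots which makes $P_i\in\tilde F_i$ immediate''), but the substantive claim in the \textbf{else} branch --- that the running frame entry at position $q_m$ is \emph{exactly} an element of $\stabs(\Pi)$ --- is not immediate and is precisely what the paper establishes by an explicit induction: write the $q_m$-entry of $\tilde F_m$ as the $q_m$-entry of $\tilde F_0$ plus the $P_{m'}$-corrections accumulated from the earlier $\decouple$ calls, expand the $\tilde F_0$-entry in the $\Pi$ basis via the inverted triangular relation, and observe that the corrections cancel every term except $P_m$. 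Your pointer to ``the upper-triangular, invertible form of $\underbar b$'' is the right ingredient; it just needs to be unpacked into this computation rather than folded into the same invariant that drives the \textbf{if} branch.
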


\begin{proof}
Suppose Fig.~\ref{algo:pfredux} enters the first branch at line~\ref{line:branch}. By construction, for each $\sigma_{q_i}$, $\bwdAction{F \circ F_{\text{aux}}}(\sigma_{q_i}) = P'_i$ is both in the frame $F\circ F_{\text{aux}}$, and a stabilizer of $\Pi'$. Also, the outcome of $\decouple(F \circ F_{\text{aux}}, (P_i', Q_i'))$ contains $P_j'$ for any $j \neq i$, since $\lambda(P_j', Q_i') = 0$. Therefore the decoupling sequence is well-formed.

Alternatively, suppose Fig.~\ref{algo:pfredux} enters the else branch at line~\ref{line:else}. Note, as a property of using of the exact version of Fig.~\ref{algo:stab}, we have that for all $m < M$
\begin{align} \label{eq:uppertri}
\bwdAction{\tilde F_0^{-1}}(P_m) = \sigma_{q_m} + \sum_{i < m} c_{mi} \sigma_{q_i},
\end{align}
where $c_{mi}$ is some binary scalar and $P_m$ is the 
$m^{th}$ element of $\stabs(\Pi)$ which was reduced to support 1 when masked in the call at line~\ref{line:SSAcall}.
In the first step of the loop, It is clear from Eq. \eqref{eq:uppertri}, that $\bwdAction{\tilde F_0}(\sigma_{q_0}) = P_0 \in \Pi$. Thus the first $\decouple$ is well-formed. 

For the sake of argument, now assume that for all $m'< m$, where $m < M$, the $m'^{th}$ $\decouple$ was well-formed. Now consider the $m^{th}$ step and $\bwdAction{\tilde F}_m(\sigma_{q_m}) \in \tilde{F}_m$ which, based on our hypothesis and the definition of $\decouple$ can be written as,
\begin{align}
\bwdAction{\tilde F}_m(\sigma_{q_m}) = \bwdAction{\tilde F}_0(\sigma_{q_m}) + \sum_{m' < m} \lambda(F_0(\sigma_{q_m}), Q_{m'}) P_{m'}
\end{align}
where we note that because $\bwdAction{\tilde F}_0(\sigma_{q_m}) \in \Span(\Pi)$, we can expand it in $\Pi$ as a basis, noting $\lambda(\bwdAction{\tilde F}_0(\sigma_{q_m}), P_{m'}) = 0$:
\begin{align}
\bwdAction{\tilde F}_0(\sigma_{q_m}) 
=& \sum_{m' < M} \lambda(\bwdAction{\tilde F}_0(\sigma_{q_m}), Q_{m'}) P_{m'} \nonumber \\
=& P_m + \sum_{m' < m} \lambda(\bwdAction{\tilde F}_0(\sigma_{q_m}), Q_{m'}) P_{m'},
\end{align}
where the last equality is a consequence of inverting the upper triangular form of Eq. \eqref{eq:uppertri} and the uniqueness of the expansion in this basis. From this it it immediately follows $\bwdAction{\tilde F}_m(\sigma_{q_m}) = P_m$ and thus the $\decouple$ for this step is well formed and by induction the sequence of $\decouple$s is well-formed.
\end{proof}

\subsection{Non-Clifford Reduction}
Non-Clifford nodes of the \PCAST graph can also be simplified when ``seen'' by a $\Prep{P}{Q}$. As demonstrated in Eq. \eqref{eq:prepchannel}, $P \sim I$ after the application of $\Prep{P}{Q}$. Thus for a singlet node dependent on the Pauli operator $R$, $n(R)$, which is incomparable with $\Prep{P}{Q}$, we have that
\begin{align}\label{eq:preponnode}
\interp{n(R);\Prep{P}{Q}} =& \interp{\Prep{P}{Q} ; n(R)} \nonumber \\
=& \interp{\Prep{P}{Q}; n(P*R)}.
\end{align}
Note this relation goes both ways. We can also consider applying this identity to a node defined by $R' = R*P$, i.e. $\lambda(R' P) = 0$ and $\lambda(R', Q) = 1$. Which version we choose for the sake of optimization depends on the cost of the outcome. We thus outline the use of Eq.~\eqref{eq:preponnode} for the \emph{Preparation-on-Node Reduction Algorithm} in  Fig.~\ref{algo:pnredux}.\footnote{Note we only check for node merging in line~\ref{line:mergable} and not earlier as to do so would be redundant by Eq.\eqref{eq:preponnode}.}

\begin{figure}
   \begin{algorithmic}[1]
     \Function{ReduceNodesByPrep}{$G$}
       \State $G' \gets G$, $\pprep(G') \gets$ all $\pprep$ nodes in $G'$.
       \For{$ n = n(P,Q) \in \pprep(G')$}
         \For{$n'(R)$ incomparable of $n$, $n' \notin \pprep(G')$}
           \If{$\cost(n'(R * P)) < \cost(n'(R))$}
             \State $R \gets R * P$ such that $n'$ comes after $n$.
           \EndIf
         \EndFor
         \For{$n(R)$ distance $+1$ from $n$, $\lambda(P, R) = 0$}
           \If{$\cost(n(R*P))< \cost(n(R))$}
             \State Replace $n(R)$ with $n(R * P)$.
           \ElsIf{$n(R*P))$ or can be merged in $G'$}\label{line:mergable}
             \State Replace $n(R)$ with $n(R * P)$. 
           \EndIf
         \EndFor
       \EndFor
       \State \Return $G'$.
     \EndFunction
   \end{algorithmic}
  \caption{Psuedocode for Preparation-on-Node Reduction Algorithm.} \label{algo:pnredux}
\end{figure}

\section{\PCAST Graph Optimizations} \label{sec:fullopt}
The previous sections outline the tools we leverage for \PCAST graph optimization. The full sequence of the \emph{\PCAST Graph Optimization Algorithm} is described in Fig.~\ref{algo:internop}. Note the conditional introduced at line~\ref{line:preprelease} is added specifically to target cases similar to Fig.~\ref{fig:motived}.

\begin{figure}
  \begin{algorithmic}[1]
    \Function{OptimizeGraph}{$G$, $F$, ``hold'' or ``release''}
      \State $G' = \emptyset,$ and $G''\gets G$.
      \While{$G' \neq G''$}
        \State $G' \gets$ \Call{ReduceNodesByPrep}{$G''$}
      \EndWhile
      \State Let $\Pi_{\text{e}}$ be the set of $\pprep$ nodes with outdegree $0$.
      \If{this is the ``hold'' case}
        \State $(\Pi_{\text{e}}, F) \gets$ \Call{ReduceFrameByPrep}{$\Pi_{\text{e}}, F$}.
      \ElsIf{this is the ``release'' case}
        \State Let $M_{\text{e}}$ be all $\mmeas$ nodes with outdegree $0$.
        \State Let $\C_{\text{e}}$ be the measurement space of $M_{\text{e}}$. 
        \State Remove all incomparable nodes of $M_{\text{e}}$ in $G'$.
        \State $F \gets F^0$ in $G'$.
        \If{$\Pi_{\text{e}} \neq \emptyset$} \label{line:preprelease}
          \State Let $\C_{\text{e}}$ be the measurement space of $M_{\text{e}}$.
          \State Remove $M_{\text{e}}$ from $G'$.
          \State $(F_{\text{s}}, C) \gets $\Call{FindStabilizers}{$M_{\text{e}}$}.
          \State Let $M_{\text{n}}$ be the single-qubit $\mmeas$ from $C$.
          \State Let $C_{\text{n}}$ be the measurement space of $M_{\text{n}}$.
          \State $(\Pi_{\text{e}}, F) \gets$\Call{ReduceFrameByPrep}{$\Pi_{\text{e}}, F_{\text{s}}$}.
          \State Add $M_{\text{n}}$ to $G'$.
          \State $(\underbar b, v) \gets$\Call{MapMeasurements}{$M_{\text{e}}, C, F_{\text{n}}$}
          \State Add the node $(v + \underbar b): \C_{\text{n}} \to \C_{\text{e}}$ to $G'$. 
        \EndIf
      \EndIf
      \State \Return $G'$
    \EndFunction
  \end{algorithmic}
  \caption{\PCAST Graph Optimization Algorithm}
\label{algo:internop}
\end{figure}

While Fig.~\ref{algo:internop} is used to optimize the \PCAST graph back to another \PCAST graph, there are other uses of the routines described herein. During circuit synthesis as described in \companion, the synthesis of non-Clifford nodes intentionally defers synthesis of measurement nodes topologically at the end of the graph in the ``release'' case and returns them. Note these nodes are clearly mutually commuting and so we are free to use Fig.~\ref{algo:stab} to synthesize the remaining circuit and Fig.~\ref{algo:mm} for the mapping of measurements. A benefit of this methodology is the stabilizer search is performed in the ``context'' of the rest of the circuit since the measurements are transformed by the residual Clifford. For example, consider the case of a VQE algorithm circuit where one first prepares an ansatz state and then measures part of the Hamiltonian using a mutually commuting set. The stabilizer search of the mutually commuting measurements in isolation is different than that of the search when combined with the (incomplete) circuit synthesis of the ansatz. As such, its not simply that we optimize the Clifford circuit at the interface of the ansatz preparation and the measurements, but the combined effect may change what Pauli operators are measured to build the desired outcome. We demonstrate this point in Fig.~\ref{fig:vqe_meas} and in the results of Section~\ref{sec:eval}.

\begin{figure} \label{fig:vqe_meas}
\centering

   \begin{tikzpicture}
        \draw[fill=blue!20, very thick, rounded corners, align=center] (0.5, -1.1) rectangle (1.55, 1.3);
        \node[align=center] at (1, 0.1)
            {Ansatz \\ Prep};

    \node[scale=0.6] at (3.5, 0) {
       \begin{tikzcd}[row sep=0.2cm]
           &\startingpoprgate{\Meas[c_0]{Z_0 Z_1}}
           &\startingpoprgate{\Meas[c_5]{Z_2 Z_3}}
           \\
           &\startingpoprgate{\Meas[c_1]{Z_0 Z_2}}
           &\startingpoprgate{\Meas[c_6]{X_0 X_1 X_2 X_3}}
           \\
           &\startingpoprgate{\Meas[c_2]{Z_0 Z_3}}
           &\startingpoprgate{\Meas[c_7]{Y_0 Y_1 Y_2 Y_3}}
           \\
           &\startingpoprgate{\Meas[c_3]{Z_1 Z_2}}
           &\startingpoprgate{\Meas[c_8]{X_0 X_1 Y_2 Y_3}}
           \\
           &\startingpoprgate{\Meas[c_4]{Z_1 Z_3}}
           &\startingpoprgate{\Meas[c_9]{Y_0 Y_1 X_2 X_3}}
           \\
       \end{tikzcd}
     };


        \draw[fill=blue!20, very thick, rounded corners, align=center] (-1.1, 2.9) rectangle (0, 5.15);
        \node[align=center] at (-0.55, 4)
            {Ansatz \\ Prep};    

    \node[scale=0.6] at (1.1, 4) {
       \begin{tikzcd}[row sep=0.2cm]
           &\startingpoprgate{\Meas[c'_0]{-Z_0 Z_1}}\arrow[r]&\\
           &\startingpoprgate{\Meas[c'_1]{Y_0 Y_1 X_2 X_3}}\arrow[r]&\\
           &\startingpoprgate{\Meas[c'_2]{-Z_2 Z_3}}\arrow[r]&\\
           &\startingpoprgate{\Meas[c'_3]{Z_0 Z_2}}\arrow[r]&\\
       \end{tikzcd}
       };

    \node[scale=0.6] at (4.15, 4) {
       \begin{tikzcd}[row sep=0.2cm]
         & \startingframegate{1}{
           \begin{aligned}
             \begin{bmatrix}
               c_0\\c_1\\c_2\\c_3\\c_4\\c_5\\c_6\\c_7\\c_8\\c_9
             \end{bmatrix}
             \mapsto
             \begin{bmatrix}
               1\\0\\1\\1\\0\\1\\0\\0\\0\\0
             \end{bmatrix}
             +
             \begin{bmatrix}
               1 & 0 & 0 & 0\\
               0 & 0 & 0 & 1\\
               0 & 0 & 1 & 1\\
               1 & 0 & 0 & 1\\
               1 & 0 & 1 & 1\\
               0 & 0 & 1 & 0\\
               1 & 1 & 0 & 0\\
               0 & 1 & 1 & 0\\
               1 & 1 & 1 & 0\\
               0 & 1 & 0 & 0
             \end{bmatrix}
             \begin{bmatrix}
               c'_0\\c'_1\\c'_2\\c'_3
             \end{bmatrix}
           \end{aligned} 
          }
       \end{tikzcd}
       };

     \draw [decorate,decoration={brace,amplitude=5pt, mirror, raise=4ex}]
  (1.75,-0.65) -- (5.65,-0.65);

    \draw [decorate,decoration={brace,amplitude=5pt, raise=4ex}]
  (.5, 0.85) -- (5.65, 0.85);

    \node at (3.25, 2) {\Large $\equiv^{\text{release}}$};
    \draw[->] (3, 2.2) -- (3, 2.55);

    \node at (4, -1.8) {\Large $\equiv^{\text{release}}$};
    \draw[->] (3.7, -2.05) -- (3.7, -2.45);
  
    \node[scale=0.6] at (0.75, -4) {
       \begin{tikzcd}[row sep=0.2cm]
           &\startingpoprgate{\Meas[c'_0]{-Z_0 Z_1}}\arrow[r]&\\
           &\startingpoprgate{\Meas[c'_1]{-Z_2 Z_3}}\arrow[r]&\\
           &\startingpoprgate{\Meas[c'_2]{X_0 X_1 X_2 X_3}}\arrow[r]&\\
           &\startingpoprgate{\Meas[c'_3]{-Z_1 Z_3}}\arrow[r]&\\
       \end{tikzcd}
       };

    \node[scale=0.6] at (3.9, -3.9) {
       \begin{tikzcd}[row sep=0.2cm]
         & \startingframegate{1}{
           \begin{aligned}
             \begin{bmatrix}
               c_0\\c_1\\c_2\\c_3\\c_4\\c_5\\c_6\\c_7\\c_8\\c_9
             \end{bmatrix}
             \mapsto
             \begin{bmatrix}
               1\\1\\0\\0\\1\\1\\0\\0\\0\\0
             \end{bmatrix}
             +
             \begin{bmatrix}
               1 & 0 & 0 & 0\\
               1 & 1 & 0 & 1\\
               1 & 0 & 0 & 1\\
               0 & 1 & 0 & 1\\
               0 & 0 & 0 & 1\\
               0 & 1 & 0 & 0\\
               0 & 0 & 1 & 0\\
               1 & 1 & 1 & 0\\
               0 & 1 & 1 & 0\\
               1 & 0 & 1 & 0
             \end{bmatrix}
             \begin{bmatrix}
               c'_0\\c'_1\\c'_2\\c'_3
             \end{bmatrix}
           \end{aligned} 
          }
       \end{tikzcd}
       };
  
     \end{tikzpicture}
  \caption{\PCAST graph representations for a hydrogen VQE example ($H_2$; JW encoding). The center graph represents the ansatz preparation and measurement of one mutually commuting set. The graph below represents the stabilizer search result for the measurements alone, whereas the top result represents the stabilizer search result in the context of the ansatz. Table~\ref{tab:comb_vs_sep} demonstrates the top outcome results in a better circuit.} \label{fig:vqe_meas}  
\end{figure}

\section{Evaluation}\label{sec:eval}
\begin{table*}[]
\centering
\caption{Statistics for the measurement circuits for each molecule for both BK and JW mappings.}
\label{tab:n_meas}
\resizebox{\textwidth}{!}{%
\begin{tabular}{@{}cccccclcclcclcclcc@{}}
\toprule
\multicolumn{1}{l}{} &
  \multicolumn{1}{l}{} &
  \multicolumn{1}{l}{} &
  \multicolumn{1}{l}{} &
  \multicolumn{2}{c}{Avg total gates} &
   &
  \multicolumn{2}{c}{Avg 2Q gates} &
   &
  \multicolumn{2}{c}{Avg Depth} &
   &
  \multicolumn{2}{c}{Avg $r_{2q}$} &
   &
  \multicolumn{2}{c}{Avg $k$} \\ \cmidrule(lr){5-6} \cmidrule(lr){8-9} \cmidrule(lr){11-12} \cmidrule(lr){14-15} \cmidrule(l){17-18} 
Benchmark & \# Qubits & \# Measurements & \# Groupings & BK    & JW    &  & BK    & JW    &  & BK    & JW    &  & BK    & JW    &  & BK    & JW    \\ \midrule
H$_2$        & 4         & 15              & 2            & 5.00  & 12.00 &  & 0.00  & 2.50  &  & 1.50  & 4.50  &  & -     & 2.40  &  & 4.00  & 3.50  \\
HF        & 12        & 631             & 38           & 41.61 & 38.34 &  & 9.79  & 8.89  &  & 10.53 & 7.89  &  & 6.53  & 7.12  &  & 10.26 & 9.82  \\
LiH       & 12        & 631             & 41           & 38.07 & 36.34 &  & 8.27  & 8.34  &  & 9.41  & 8.15  &  & 7.76  & 7.24  &  & 10.46 & 9.54  \\
BeH$_2$      & 14        & 666             & 36           & 47.33 & 43.83 &  & 11.47 & 10.36 &  & 11.17 & 9.69  &  & 7.53  & 8.40  &  & 11.31 & 11.31 \\
H$_2$O       & 14        & 1086            & 50           & 48.98 & 46.88 &  & 11.52 & 11.46 &  & 10.76 & 9.78  &  & 7.65  & 7.58  &  & 11.98 & 11.36 \\
BH$_3$      & 16        & 1957            & 76           & 48.85 & 53.15 &  & 10.29 & 13.11 &  & 10.45 & 10.03 &  & 11.47 & 8.93  &  & 14.40 & 13.57 \\
NH$_3$       & 16        & 2949            & 105          & 52.28 & 55.48 &  & 11.26 & 13.64 &  & 10.68 & 10.50 &  & 10.54 & 8.60  &  & 14.51 & 13.82 \\
CH$_4$       & 18        & 6892            & 163          & 71.14 & 59.24 &  & 17.58 & 13.58 &  & 13.89 & 9.54  &  & 8.60  & 11.04 &  & 16.74 & 16.06 \\
B$_2$        & 20        & 2239            & 64           & 84.96 & 72.70 &  & 23.94 & 20.23 &  & 16.00 & 13.06 &  & 6.31  & 7.42  &  & 16.64 & 16.26 \\
O$_2$        & 20        & 2239            & 67           & 85.89 & 72.84 &  & 24.57 & 20.16 &  & 15.55 & 12.95 &  & 7.46  & 9.04  &  & 16.43 & 16.18 \\
Be$_2$       & 20        & 2951            & 74           & 89.56 & 78.67 &  & 24.98 & 22.16 &  & 16.10 & 14.06 &  & 7.42  & 8.40  &  & 17.13 & 16.95 \\
C$_2$        & 20        & 2951            & 75           & 90.34 & 78.06 &  & 25.36 & 21.83 &  & 15.97 & 13.55 &  & 7.23  & 8.51  &  & 17.28 & 17.24 \\
F$_2$        & 20        & 2951            & 76           & 88.66 & 77.15 &  & 25.09 & 22.06 &  & 16.57 & 13.95 &  & 7.36  & 8.32  &  & 16.86 & 16.42 \\
Li$_2$       & 20        & 2951            & 78           & 90.81 & 77.22 &  & 25.87 & 21.96 &  & 17.04 & 13.67 &  & 7.12  & 8.40  &  & 16.87 & 16.58 \\
N$_2$        & 20        & 2951            & 79           & 93.53 & 79.00 &  & 27.10 & 22.10 &  & 16.90 & 13.91 &  & 6.86  & 8.41  &  & 17.35 & 17.06 \\ \bottomrule
\end{tabular}%
}
\end{table*}

We focus on the VQE use case and the use of Fig.~\ref{algo:stab} for measurement reduction. For comparison, we consider three methods found in the literature for solving a similar problem.\footnote{Unlike the references to follow, we only focus on the circuit generation, not the grouping problem for which they provide methods which are fully compatible with this work as demonstrated by our use of the grouping method described in Ref.~\cite{Crawford2021}.} Ref.~\cite{Verteletskyi2020} only considers the case of qubit-wise commutativity i.e. full qubit agreement as defined in Defn.~\ref{def:agree}. By virtue of the initial search for agreement in lines \ref{line:genstart}-\ref{line:genend}, Fig.~\ref{algo:stab} produces essentially the same circuit under these circumstances, thus making a direct comparison unnecessary. Ref.~\cite {Gokhale2020} allows for general commutativity, but their code implementation requires the commuting set have exactly $N$ independent elements (whereas our implementation allows one to specify any commuting set without considering independence). They also map the elements of the set, under conjugation by a Clifford, exactly to a single-qubit measurement, i.e. they do not leverage smaller weight bases for the stabilizer space.
Ref.~\cite{Crawford2021} is the least constrained and as such represents the best comparison to our methods; we use their method, \textsc{sorted insertion}, for partitioning the Hamiltonian, and compare \PCAST circuit constructions against their method.
Their methods achieve a ratio of the theoretical maximum number of two-qubit gates to actual number of two-qubit gates of approximately 3.5. Thus we compute a similar metric, 
\begin{align}\label{eq:r_2q}
  r_{\text{2q}} = \frac{Nk - k(k+1)/2}{\text{\# 2q gates}},
\end{align}
where $k$ is extracted as the number of single-qubit measurements in the PCOAST-optimized circuit.


\PCAST is implemented in C++ as the core optimization of the Intel$^{\text{\textregistered}}\xspace$ Quantum SDK, enabled by the (-O1) flag, which we use for all our experiments. We apply \PCAST to the Hamiltonians of several molecules in the context of VQE. We use Qiskit's PySCF driver~\cite{qiskit, pyscf} 
to express all fermionic Hamiltonians in the STO-3G basis as qubit observables using either the Bravyi-Kitaev (BK)~\cite{Bravyi_2002} or Jordan-Wigner (JW)~\cite{Jordan1928} transformation. Grouping of the Hamiltonian terms in mutually commuting sets is obtained by using an adapted version of the \textsc{sorted insertion} algorithm of Ref.~\cite{Crawford2021}. For all benchmarks in Table~\ref{tab:comb_vs_sep}, we pick the Unitary Coupled-Cluster Single and Double excitations (UCCSD)~\cite{uccsd} as our ansatz. Our experiments use an Intel Xeon$^{\text{\textregistered}}$ Platinum CPU ($2.4$GHz, $2$TB RAM). All averages are over the various groupings. 

\begin{table}[]
\centering
\caption{Average (\%) reductions in total gate count, two-qubit gates, depth, and measurement gates achieved when PCOAST synthesizes a combined ansatz and measurement circuit vs each individually. Results are shown for both BK and JW mappings.}
\label{tab:comb_vs_sep}
\resizebox{\columnwidth}{!}{%
\begin{tabular}{@{}cccccccccccc@{}}
\toprule
 &
  \multicolumn{2}{c}{\textbf{Total Gates}} &
   &
  \multicolumn{2}{c}{\textbf{2Q Gates}} &
   &
  \multicolumn{2}{c}{\textbf{Depth}} &
   &
  \multicolumn{2}{c}{\textbf{Meas}} \\ \cmidrule(lr){2-3} \cmidrule(lr){5-6} \cmidrule(lr){8-9} \cmidrule(l){11-12} 
\textbf{Benchmark} &
  \textbf{BK} &
  \textbf{JW} &
   &
  \textbf{BK} &
  \textbf{JW} &
   &
  \textbf{BK} &
  \textbf{JW} &
   &
  \textbf{BK} &
  \textbf{JW} \\ \midrule
H$_2$   & 70.00 & 35.27 &  & 0.00      & 41.34 &  & 50.00 & 5.45 &  & 75.00 & -16.67 \\
HF      & 5.89  & 11.54 &  & 3.92      & 10.64 &  & 3.89  & 6.56 &  & 7.11  & 3.62   \\
LiH     & 5.19  & 0.93  &  & 5.12      & -0.72 &  & 2.01  & 0.34 &  & 1.63  & 3.03   \\
BeH$_2$ & 1.90  & 3.99  &  & 0.62      & 2.69  &  & 1.75  & 0.64 &  & 2.91  & -7.25  \\
H$_2$O  & 2.81  & 5.19  &  & 2.52      & 4.33  &  & 4.26  & 2.18 &  & 3.55  & 0.62   \\
CH$_4$  & 4.09  & 3.16  &  & 3.84      & 2.97  &  & 1.04  & 0.32 &  & 1.04  & -1.70  \\
\bottomrule
\end{tabular}%
}
\end{table}

Table~\ref{tab:n_meas} shows our results. We find a combined average $r_{2q}$ value over all benchmarks of 7.85 (7.99) with a maximum value of 11.47 (11.04) for the BK (JW) encoding. All metrics appear to scale linearly in qubit number including $k$ with the ratio $\frac{k}{N}$ an average value of $83.64\%$. It is worth noting that in a majority of benchmarks, the JW encoding outperforms the BK encoding.  

We also consider the outcome of synthesizing the measurement circuit collectively with the ansatz using \PCAST. Table~\ref{tab:comb_vs_sep} shows the average reductions gained when PCOAST synthesizes a quantum program comprised of both ansatz and measurement circuits vs. when it synthesizes each part individually. Overall, we see that when  combined, PCOAST further reduces the total gates, two-qubit gates, depth, and measurement gates by $14.98\%$ ($10.01\%$), $2.67\%$ ($10.21\%$),  $10.49\%$ ($2.58\%$), and $15.21\%$ ($-3.06\%$) for the BK (JW) encoding.

Though there is a reduction in aggregate, we find several instance where the separate synthesis outperform the combined synthesis. Because the separate synthesis outcome is theoretically achievable with the combined synthesis, this demonstrates that there is more refinement available to both \PCAST in general and the Stabilizer Search Algorithm in particular. 

\section{Conclusion} \label{sec:conc}
In this paper, we have introduced a set of optimization routines included in \PCAST which optimize at the interface between unitary and non-unitary operations. This includes reduction of node cost and node number in the presence of preparation nodes via the Preparation-on-Node Reduction Algorithm, reduction of cost for Clifford operations in the presence of preparations via the Preparation-on-Frame Reduction Algorithm, and measurement cost reduction using Clifford operations via the Stabilizer Search Algorithm and the classical remapping of measurement outcomes via the Map Measurement Algorithms.
These routines are also combined to amplify their effectiveness when optimizing the \PCAST graph via the PCOAST Graph Optimization Algorithm. These routines were described in detail and shown to be valid relative to the appropriate equivalence (``hold'' or ``release''). We numerically studied the effectiveness of the Stabilizer Search Algorithm as used for measurement reduction circuits for VQE. We found an overall average ratio of the theoretical maximum number of two-qubit gates to resulting number of two-qubit gates of 7.85 (7.99) for the BK (JW) encoding. We also showed that synthesis of the VQE ansatz along with the measurements results in a better aggregate circuit.

In future work, we hope to further refine the algorithms presented here. In particular, we hope to increase the efficiency of the general Stabilizer Search Algorithm by considering \emph{partial} qubit masking. Instead of a fully opaque qubit mask as defined in Eq.\eqref{eq:mask}, we can weight the cost of the support based on the fraction of measurement elements which don't agree on that support. This should further reduce the number of TQE gates needed to find full qubit support agreement within the measurement set. We also look to expand the types of optimization routines within \PCAST, including more sophisticated transformations on unitary elements, and better use of stabilizer subspace methods by finding incomparable node cliques in the \PCAST graph.  

\bibliography{references}

\begin{thebibliography}{30}
\providecommand{\natexlab}[1]{#1}
\providecommand{\url}[1]{#1}
\csname url@samestyle\endcsname
\providecommand{\newblock}{\relax}
\providecommand{\bibinfo}[2]{#2}
\providecommand{\BIBentrySTDinterwordspacing}{\spaceskip=0pt\relax}
\providecommand{\BIBentryALTinterwordstretchfactor}{4}
\providecommand{\BIBentryALTinterwordspacing}{\spaceskip=\fontdimen2\font plus
\BIBentryALTinterwordstretchfactor\fontdimen3\font minus
  \fontdimen4\font\relax}
\providecommand{\BIBforeignlanguage}[2]{{%
\expandafter\ifx\csname l@#1\endcsname\relax
\typeout{** WARNING: IEEEtranN.bst: No hyphenation pattern has been}%
\typeout{** loaded for the language `#1'. Using the pattern for}%
\typeout{** the default language instead.}%
\else
\language=\csname l@#1\endcsname
\fi
#2}}
\providecommand{\BIBdecl}{\relax}
\BIBdecl

\bibitem[Paykin et~al.(2023)Paykin, Schmitz, Ibrahim, Wu, and
  Matsuura]{PaykinSchmitz2023PCOAST}
\BIBentryALTinterwordspacing
J.~Paykin, A.~T. Schmitz, M.~Ibrahim, X.-C. Wu, and A.~Y. Matsuura, ``{PCOAST}:
  A {P}auli-based quantum circuit optimization framework (extended version),''
  2023. [Online]. Available: \url{https://arxiv.org/abs/2305.10966v2}
\BIBentrySTDinterwordspacing

\bibitem[Zhang and Chen(2019)]{Zhang2019}
F.~Zhang and J.~Chen, ``Optimizing {T} gates in {Clifford+T} circuit as $\pi/4$
  rotations around {Paulis},'' 2019.

\bibitem[Cowtan et~al.(2019)Cowtan, Dilkes, Duncan, Simmons, and
  Sivarajah]{cowtan2019phase}
A.~Cowtan, S.~Dilkes, R.~Duncan, W.~Simmons, and S.~Sivarajah, ``Phase gadget
  synthesis for shallow circuits,'' in \emph{16th International Conference on
  Quantum Physics and Logic 2019}.\hskip 1em plus 0.5em minus 0.4em\relax Open
  Publishing Association, 2019, pp. 213--228.

\bibitem[Schmitz et~al.(2021)Schmitz, Sawaya, Johri, and Matsuura]{Schmitz2021}
\BIBentryALTinterwordspacing
A.~T. Schmitz, N.~P.~D. Sawaya, S.~Johri, and A.~Y. Matsuura, ``Graph
  optimization perspective for low-depth {T}rotter-{S}uzuki decomposition,''
  2021. [Online]. Available: \url{https://arxiv.org/abs/2103.08602}
\BIBentrySTDinterwordspacing

\bibitem[Verteletskyi et~al.(2020)Verteletskyi, Yen, and
  Izmaylov]{Verteletskyi2020}
\BIBentryALTinterwordspacing
V.~Verteletskyi, T.-C. Yen, and A.~F. Izmaylov, ``{Measurement optimization in
  the variational quantum eigensolver using a minimum clique cover},''
  \emph{The Journal of Chemical Physics}, vol. 152, no.~12, 03 2020, 124114.
  [Online]. Available: \url{https://doi.org/10.1063/1.5141458}
\BIBentrySTDinterwordspacing

\bibitem[Gokhale et~al.(2020)Gokhale, Angiuli, Ding, Gui, Tomesh, Suchara,
  Martonosi, and Chong]{Gokhale2020}
P.~Gokhale, O.~Angiuli, Y.~Ding, K.~Gui, T.~Tomesh, M.~Suchara, M.~Martonosi,
  and F.~T. Chong, ``$o(n^3)$ measurement cost for variational quantum
  eigensolver on molecular hamiltonians,'' \emph{IEEE Transactions on Quantum
  Engineering}, vol.~1, pp. 1--24, 2020.

\bibitem[Crawford et~al.(2021)Crawford, van Straaten, Wang, Parks, Campbell,
  and Brierley]{Crawford2021}
O.~Crawford, B.~van Straaten, D.~Wang, T.~Parks, E.~Campbell, and S.~Brierley,
  ``Efficient quantum measurement of pauli operators in the presence of finite
  sampling error,'' \emph{Quantum}, vol.~51, 2021.

\bibitem[Khalate et~al.(2022)Khalate, Wu, Premaratne, Hogaboam, Holmes,
  Schmitz, Guerreschi, Zou, and Matsuura]{Khalate2022}
\BIBentryALTinterwordspacing
P.~Khalate, X.-C. Wu, S.~Premaratne, J.~Hogaboam, A.~Holmes, A.~Schmitz, G.~G.
  Guerreschi, X.~Zou, and A.~Y. Matsuura, ``An {LLVM}-based {C++} compiler
  toolchain for variational hybrid quantum-classical algorithms and quantum
  accelerators,'' 2022. [Online]. Available:
  \url{https://arxiv.org/abs/2202.11142}
\BIBentrySTDinterwordspacing

\bibitem[Peruzzo et~al.(2014)Peruzzo, McClean, Shadbolt, Yung, Zhou, Love,
  Aspuru-Guzik, and O'Brien]{Peruzzo_2014}
\BIBentryALTinterwordspacing
A.~Peruzzo, J.~McClean, P.~Shadbolt, M.-H. Yung, X.-Q. Zhou, P.~J. Love,
  A.~Aspuru-Guzik, and J.~L. O'Brien, ``A variational eigenvalue solver on a
  photonic quantum processor,'' \emph{Nature Communications}, vol.~5, no.~1,
  jul 2014. [Online]. Available: \url{https://doi.org/10.1038%2Fncomms5213}
\BIBentrySTDinterwordspacing

\bibitem[Bharti et~al.(2022)Bharti, Cervera-Lierta, Kyaw, Haug, Alperin-Lea,
  Anand, Degroote, Heimonen, Kottmann, Menke, Mok, Sim, Kwek, and
  Aspuru-Guzik]{Bharti2022}
\BIBentryALTinterwordspacing
K.~Bharti, A.~Cervera-Lierta, T.~H. Kyaw, T.~Haug, S.~Alperin-Lea, A.~Anand,
  M.~Degroote, H.~Heimonen, J.~S. Kottmann, T.~Menke, W.-K. Mok, S.~Sim, L.-C.
  Kwek, and A.~Aspuru-Guzik, ``Noisy intermediate-scale quantum algorithms,''
  \emph{Rev. Mod. Phys.}, vol.~94, p. 015004, Feb 2022. [Online]. Available:
  \url{https://link.aps.org/doi/10.1103/RevModPhys.94.015004}
\BIBentrySTDinterwordspacing

\bibitem[Fedorov et~al.(2022)Fedorov, Peng, Govind, and Alexeev]{Fedorov2022}
\BIBentryALTinterwordspacing
D.~A. Fedorov, B.~Peng, N.~Govind, and Y.~Alexeev, ``Vqe method: a short survey
  and recent developments,'' \emph{Materials Theory}, vol.~6, no.~1, p.~2, Jan
  2022. [Online]. Available: \url{https://doi.org/10.1186/s41313-021-00032-6}
\BIBentrySTDinterwordspacing

\bibitem[Huggins et~al.(2021)Huggins, McClean, Rubin, Jiang, Wiebe, Whaley, and
  Babbush]{Huggins2021}
\BIBentryALTinterwordspacing
W.~J. Huggins, J.~R. McClean, N.~C. Rubin, Z.~Jiang, N.~Wiebe, K.~B. Whaley,
  and R.~Babbush, ``Efficient and noise resilient measurements for quantum
  chemistry on near-term quantum computers,'' \emph{npj Quantum Information},
  vol.~7, no.~1, p.~23, Feb 2021. [Online]. Available:
  \url{https://doi.org/10.1038/s41534-020-00341-7}
\BIBentrySTDinterwordspacing

\bibitem[Klymko et~al.(2022)Klymko, Mejuto-Zaera, Cotton, Wudarski, Urbanek,
  Hait, Head-Gordon, Whaley, Moussa, Wiebe, de~Jong, and Tubman]{Klymko2021}
\BIBentryALTinterwordspacing
K.~Klymko, C.~Mejuto-Zaera, S.~J. Cotton, F.~Wudarski, M.~Urbanek, D.~Hait,
  M.~Head-Gordon, K.~B. Whaley, J.~Moussa, N.~Wiebe, W.~A. de~Jong, and N.~M.
  Tubman, ``Real-time evolution for ultracompact hamiltonian eigenstates on
  quantum hardware,'' \emph{PRX Quantum}, vol.~3, p. 020323, May 2022.
  [Online]. Available:
  \url{https://link.aps.org/doi/10.1103/PRXQuantum.3.020323}
\BIBentrySTDinterwordspacing

\bibitem[Huang et~al.(2020)Huang, Kueng, and Preskill]{Huang2020}
\BIBentryALTinterwordspacing
H.-Y. Huang, R.~Kueng, and J.~Preskill, ``Predicting many properties of a
  quantum system from very few measurements,'' \emph{Nature Physics}, vol.~16,
  no.~10, pp. 1050--1057, Oct 2020. [Online]. Available:
  \url{https://doi.org/10.1038/s41567-020-0932-7}
\BIBentrySTDinterwordspacing

\bibitem[Wu and Hsieh(2019)]{Wu2019}
\BIBentryALTinterwordspacing
J.~Wu and T.~H. Hsieh, ``Variational thermal quantum simulation via thermofield
  double states,'' \emph{Phys. Rev. Lett.}, vol. 123, p. 220502, Nov 2019.
  [Online]. Available:
  \url{https://link.aps.org/doi/10.1103/PhysRevLett.123.220502}
\BIBentrySTDinterwordspacing

\bibitem[Zhu et~al.(2020)Zhu, Johri, Linke, Landsman, Alderete, Nguyen,
  Matsuura, Hsieh, and Monroe]{Zhu2020}
\BIBentryALTinterwordspacing
D.~Zhu, S.~Johri, N.~M. Linke, K.~A. Landsman, C.~H. Alderete, N.~H. Nguyen,
  A.~Y. Matsuura, T.~H. Hsieh, and C.~Monroe, ``Generation of thermofield
  double states and critical ground states with a quantum computer,''
  \emph{Proceedings of the National Academy of Sciences}, vol. 117, no.~41, pp.
  25\,402--25\,406, 2020. [Online]. Available:
  \url{https://www.pnas.org/doi/abs/10.1073/pnas.2006337117}
\BIBentrySTDinterwordspacing

\bibitem[McArdle et~al.(2019)McArdle, Jones, Endo, Li, Benjamin, and
  Yuan]{McArdle2019}
\BIBentryALTinterwordspacing
S.~McArdle, T.~Jones, S.~Endo, Y.~Li, S.~C. Benjamin, and X.~Yuan,
  ``Variational ansatz-based quantum simulation of imaginary time evolution,''
  \emph{npj Quantum Information}, vol.~5, no.~1, p.~75, Sep 2019. [Online].
  Available: \url{https://doi.org/10.1038/s41534-019-0187-2}
\BIBentrySTDinterwordspacing

\bibitem[Motta et~al.(2020)Motta, Sun, Tan, O'Rourke, Ye, Minnich, Brand{\~a}o,
  and Chan]{Motta2020}
\BIBentryALTinterwordspacing
M.~Motta, C.~Sun, A.~T.~K. Tan, M.~J. O'Rourke, E.~Ye, A.~J. Minnich, F.~G.
  S.~L. Brand{\~a}o, and G.~K.-L. Chan, ``Determining eigenstates and thermal
  states on a quantum computer using quantum imaginary time evolution,''
  \emph{Nature Physics}, vol.~16, no.~2, pp. 205--210, Feb 2020. [Online].
  Available: \url{https://doi.org/10.1038/s41567-019-0704-4}
\BIBentrySTDinterwordspacing

\bibitem[Paetznick and Svore(2014)]{Paetznick2014}
A.~Paetznick and K.~M. Svore, ``Repeat-until-success: Non-deterministic
  decomposition of single-qubit unitaries,'' \emph{Quantum Info. Comput.},
  vol.~14, no. 15–16, p. 1277–1301, nov 2014.

\bibitem[Moreira et~al.(2022)]{Moreira2022}
M.~S. Moreira \emph{et~al.}, ``{Realization of a quantum neural network using
  repeat-until-success circuits in a superconducting quantum processor},'' 12
  2022.

\bibitem[Calderbank et~al.(1997)Calderbank, Rains, Shor, and
  Sloane]{Calderbank1997}
\BIBentryALTinterwordspacing
A.~R. Calderbank, E.~M. Rains, P.~W. Shor, and N.~J.~A. Sloane, ``Quantum error
  correction and orthogonal geometry,'' \emph{Phys. Rev. Lett.}, vol.~78, pp.
  405--408, Jan 1997. [Online]. Available:
  \url{https://link.aps.org/doi/10.1103/PhysRevLett.78.405}
\BIBentrySTDinterwordspacing

\bibitem[Aaronson and Gottesman(2004)]{Aaronson2004}
\BIBentryALTinterwordspacing
S.~Aaronson and D.~Gottesman, ``Improved simulation of stabilizer circuits,''
  \emph{Physical Review A}, vol.~70, no.~5, nov 2004. [Online]. Available:
  \url{https://doi.org/10.1103%2Fphysreva.70.052328}
\BIBentrySTDinterwordspacing

\bibitem[Bodmann et~al.(2009)Bodmann, Le, Reza, Tobin, and
  Tomforde]{Bernhard2009}
B.~G. Bodmann, M.~Le, L.~Reza, M.~Tobin, and M.~Tomforde, ``Frame theory for
  binary vector spaces,'' 2009, arXiv: 0906.3467.

\bibitem[{Gottesman}(1997)]{Gottesman1997}
D.~{Gottesman}, ``Stabilizer codes and quantum error correction,'' Ph.D.
  dissertation, California Institute of Technology, 1997.

\bibitem[Schmitz(2019)]{Schmitz2019}
\BIBentryALTinterwordspacing
A.~T. Schmitz, ``Gauge structures: From stabilizer codes to continuum models,''
  \emph{Annals of Physics}, vol. 410, p. 167927, 2019. [Online]. Available:
  \url{http://www.sciencedirect.com/science/article/pii/S0003491619301824}
\BIBentrySTDinterwordspacing

\bibitem[et~al(2021)]{qiskit}
M.~S.~A. et~al, ``Qiskit: An open-source framework for quantum computing,''
  2021.

\bibitem[Sun et~al.(2017)Sun, Berkelbach, Blunt, Booth, Guo, Li, Liu, McClain,
  Sayfutyarova, Sharma, Wouters, and Chan]{pyscf}
\BIBentryALTinterwordspacing
Q.~Sun, T.~C. Berkelbach, N.~S. Blunt, G.~H. Booth, S.~Guo, Z.~Li, J.~Liu,
  J.~D. McClain, E.~R. Sayfutyarova, S.~Sharma, S.~Wouters, and G.~K.-L. Chan,
  ``Pyscf: the python-based simulations of chemistry framework,'' \emph{{WIREs}
  Computational Molecular Science}, vol.~8, no.~1, Sep. 2017. [Online].
  Available: \url{https://doi.org/10.1002/wcms.1340}
\BIBentrySTDinterwordspacing

\bibitem[Bravyi and Kitaev(2002)]{Bravyi_2002}
\BIBentryALTinterwordspacing
S.~B. Bravyi and A.~Y. Kitaev, ``Fermionic quantum computation,'' \emph{Annals
  of Physics}, vol. 298, no.~1, pp. 210--226, may 2002. [Online]. Available:
  \url{https://doi.org/10.1006%2Faphy.2002.6254}
\BIBentrySTDinterwordspacing

\bibitem[Jordan and Wigner(1928)]{Jordan1928}
\BIBentryALTinterwordspacing
P.~Jordan and E.~Wigner, ``Über das paulische Äquivalenzverbot,''
  \emph{Zeitschrift für Physik}, vol.~47, no. 9-10, pp. 631--651, Sep. 1928.
  [Online]. Available: \url{https://doi.org/10.1007/bf01331938}
\BIBentrySTDinterwordspacing

\bibitem[Barkoutsos et~al.(2018)Barkoutsos, Gonthier, Sokolov, Moll, Salis,
  Fuhrer, Ganzhorn, Egger, Troyer, Mezzacapo, Filipp, and Tavernelli]{uccsd}
\BIBentryALTinterwordspacing
P.~K. Barkoutsos, J.~F. Gonthier, I.~Sokolov, N.~Moll, G.~Salis, A.~Fuhrer,
  M.~Ganzhorn, D.~J. Egger, M.~Troyer, A.~Mezzacapo, S.~Filipp, and
  I.~Tavernelli, ``Quantum algorithms for electronic structure calculations:
  Particle-hole {Hamiltonian} and optimized wave-function expansions,''
  \emph{Physical Review A}, vol.~98, no.~2, aug 2018. [Online]. Available:
  \url{https://doi.org/10.1103%2Fphysreva.98.022322}
\BIBentrySTDinterwordspacing

\end{thebibliography}


\end{document}